\newtheorem{theorem}{Theorem}[section]
\newtheorem{cor}[theorem]{Corollary}
\newtheorem{prop}[theorem]{Proposition}
\newtheorem{lemma}[theorem]{Lemma}
\theoremstyle{remark}
\newtheorem{remark}[theorem]{Remark}
\newtheorem{defn}[theorem]{Definition}
\newtheorem{example}[theorem]{Example}
\newcommand{\re}{\operatorname{Re}}
\newcommand{\im}{\operatorname{Im}}
\let\text=\mbox
\def\a{\alpha}
\def\b{\beta}
\def\d{\delta}
\def\g{\lambda}
\def\q{\quad}
\def\eR{{\bf R}}
\def\eN{{\bf N}}
\def\Ze{{\bf Z}}
\def\Ce{{\bf C}}
\def\ty{\infty}
\def\e{\varepsilon}
\def\:{{\penalty10000\hbox{\kern1mm\rm:\kern1mm}\penalty10000}}
\def\ov#1{\overline{#1}}
\def\O{\Omega}
\def\pa{\partial}
\def\st{\subset}
\def\stq{\subseteq}
\def\q{\quad}
\def\bg{\begin}
\def\endeqnn{\end{eqnarray*}}
\def\bgeqn{\bg{eqnarray}}
\def\endeqn{\end{eqnarray}}
\def\bgeqq#1#2{\bgeqn\label{#1} #2\left\{\begin{array}{ll}}
\def\endeqq{\end{array}\right.\endeqn}
\def\eR{\mathbb{R}}
\def\eN{\mathbb{N}}
\def\Ze{\mathbb{Z}}
\def\Ce{\mathbb{C}}
\def\I{\mathbbm{i}}
\def\E{\mathrm{e}}
\def\dd{{\mathrm d}}
\numberwithin{equation}{section}
\title{Essential singularities of fractal zeta functions}
\author[M.\ L.\ Lapidus]{Michel L.\ Lapidus}
\address[M.\ L.\ Lapidus]{University of California, Riverside, Department of Mathematics,
900 University Avenue, Riverside, CA 92521-0135, USA}
\email{{\tt lapidus@math.ucr.edu}}
\author[G.\ Radunovi\'c]{Goran Radunovi\'c}
\address[G.\ Radunovi\'c]{University of Zagreb, Faculty of Science - Department of Mathematics, Bijeni\v cka Cesta 30, 10000 Zagreb, Croatia}
\email{\tt goran.radunovic@math.hr}
\author[D.\ \v Zubrini\'c]{Darko \v Zubrini\'c}
\address[D.\ \v Zubrini\'c]{University of Zagreb, Faculty of Electrical Engineering and Computing, Unska 3, 10000 Zagreb, Croatia}
\email{\tt darko.zubrinic@fer.hr}
\thanks{The work of Michel L.\ Lapidus was partially supported by the US National Science Foundation (NSF) under the research grants DMS-0707524 and DMS-1107750, as well as by the Burton Jones Chair Fund, as he has been the holder of the Burton Jones Chair in Pure Mathematics since June 2017 at the University of California, Riverside.}
\thanks{The research of Goran Radunovi\'c was supported by the Croatian Science Foundation under the project IP-2014-09-2285 and UIP-2017-05-1020.}
\thanks{The research of Darko \v Zubrini\'c was supported by the Croatian Science Foundation under the project IP-2014-09-2285.}
\keywords{Fractal zeta function, essential singularity, complex dimension, generalized Cantor set, fractal string, meromorphic function, meromorphic continuation, paramorphic function, paramorphic continuation, abscissa of paramorphic continuation, power series of bounded fractal strings, geometric zeta function, distance zeta function, paraharmonic function}
\subjclass[2010]{11M41, 28A80, 28A12, 30D30, 30D05, 28A75, 42B20, 40A10}
\begin{document}

\begin{abstract}
We study the essential singularities of geometric zeta functions $\zeta_{\mathcal L}$, associated with bounded fractal strings $\mathcal L$. For any three prescribed real numbers $D_\ty$, $D_1$ and $D$ in $[0,1]$, such that $D_\ty<D_1\le D$, we construct a bounded fractal string $\mathcal L$ such that $D_{\rm par}(\zeta_{\mathcal L})=D_{\ty}$, $D_{\rm mer}(\zeta_{\mathcal L})=D_1$ and $D(\zeta_{\mathcal L})=D$. Here, $D(\zeta_{\mathcal L})$ is the abscissa of absolute convergence of $\zeta_{\mathcal L}$, $D_{\rm mer}(\zeta_{\mathcal L})$ is the abscissa of meromorphic continuation of $\zeta_{\mathcal L}$, while $D_{\rm par}(\zeta_{\mathcal L})$ is the infimum of all real numbers $\a$ such that $\zeta_{\mathcal L}$ is holomorphic in the open right half-plane $\{\re s>\a\}$, except for possible isolated singularities  in this half-plane. Defining $\mathcal L$ as the disjoint union of a sequence of suitable generalized Cantor strings, we show that the set of accumulation points of the set $S_\ty$ of essential singularities of $\zeta_{\mathcal L}$, contained in the open right half-plane $\{\re s>D_{\ty}\}$, coincides with the vertical line $\{\re s=D_{\ty}\}$. 
We extend this construction to the case of distance zeta functions $\zeta_A$ of compact sets $A$ in $\eR^N$, for any positive integer $N$.
\end{abstract}

\maketitle

%%%%%%%%%%%%%%%%%%%%%%%%%%%%%%%%%%%%%%%%%%%%%%%%%%%%%%%%%%%%%%%%%%%%%%%%%%%%%%%%

\section{Introduction and notation}

\subsection{Introduction} In the theory of bounded fractal strings, developed since the early 1990s by the first author and his collaborators in numerous papers and several research monographs (see the books \cite{lapidusfrank12,fzf}, the survey article \cite{Lap}, and the many relevant references therein), to each fractal string $\mathcal L$ a set of {\em complex dimensions}, denoted by $\dim_{\Ce} \mathcal L$, is assigned, defined as the set of poles of the corresponding geometric zeta function $\zeta_{\mathcal L}$, suitably meromorphically extended. In this paper, we provide a construction of a class of fractal strings such that the corresponding geometric zeta functions generate {\em essential\,} singularities accumulating along a prescribed vertical line $\{\re s=D_{\ty}\}$ of the complex plane, with arbitrarily prescribed $D_{\ty}\in[0,1)$. This is a new phenomenon appearing in the theory of fractal strings. The main result is stated in Theorem~\ref{main}.

The first example of a fractal string $\mathcal L$, the geometric zeta function $\zeta_{\mathcal L}$ of which possesses essential singularities, has been constructed in \cite[Example~3.3.7 on p.~215]{fzf} (see also \cite{ra1}), starting from the classical Cantor string. In this paper, we first extend this construction to a class of generalized Cantor strings depending on two real parameters.
\bigskip

\subsection{Notation}\label{notation} Following \cite{fzf}, we introduce some basic notation that we shall need in the sequel.

A {\em bounded fractal string} $\mathcal L=(\ell_j)_{j\in\eN}$, is defined as being either a nonincreasing infinite sequence of positive real numbers such that $\sum_{j=1}^\ty\ell_j<\ty$ or else a finite sequence of positive real numbers. Its {\em length} is 
\begin{equation}
|\mathcal L|_1:=\sum_{j=1}^\ty\ell_j.
\end{equation}
For any two bounded fractal strings $\mathcal L_1=(\ell_{1j})_{j\in\eN}$ and $\mathcal L_2=(\ell_{2k})_{k\in\eN}$, we define their {\em tensor product},
\begin{equation}
\mathcal L_1\otimes\mathcal L_2:=(\ell_{1j}\ell_{2j})_{j,k\in\eN},
\end{equation}
 as the fractal string consisting of all possible products $\ell_{1j}\ell_{2j}$, where $i,j\in\eN$, counting the multiplicities. It is also bounded, since $|\mathcal L_1\otimes\mathcal L_2|_1=|\mathcal L_1|_1\cdot|\mathcal L_2|_1<\ty$. 
We can also define their {\em disjoint union} $\mathcal L_1\sqcup\mathcal L_2$ as the union of multisets; that is, each element of $\mathcal L_1\sqcup\mathcal L_2$ has the multiplicity equal to the sum of its multiplicities in $\mathcal L_1$ and $\mathcal L_2$. It is possible to define the {\em disjoint union} $\sqcup_{i=1}^\ty {\mathcal L_i}$ of an infinite sequence $\mathcal L_i=(\ell_{ij})_{j\in\eN}$ of bounded fractal strings, where $i\in\eN$, provided $\sum_{i,j}\ell_{ij}<\ty$. 
%and $\max_{j\in\eN}\ell_{ij}\to0$ as $i\to\ty$. unnecessary
For any positive real number $\g$ and a bounded fractal string $\mathcal L=(\ell_j)_{j\in\eN}$, we can define a new fractal string $\g\mathcal L:=(\g\ell_j)_{j\in\eN}$.

The {\em geometric zeta function} $\zeta_{\mathcal L}$ of a given bounded fractal string $\mathcal L=(\ell_j)_{j\in\eN}$ is defined by 
\begin{equation}\label{eq:3}
\zeta_{\mathcal L}(s):=\sum_{j=1}^\ty\ell_j^s,
\end{equation}
 where $s$ is a complex number with $\re s>1$. Clearly, $\zeta_{\mathcal L}(1)=|\mathcal L|_1<\ty$. The {\em abscissa of absolute convergence} of $\zeta_{\mathcal L}$ is denoted by $D(\zeta_{\mathcal L})$, while the {\em abscissa of meromorphic continuation} of $\zeta_{\mathcal L}$ is denoted by $D_{\rm mer}(\zeta_{\mathcal L})$. It can be easily verified that $-\ty\le D_{\rm mer}(\zeta_{\mathcal L})\le D(\zeta_{\mathcal L})\le1$, where $D(\zeta_{\mathcal L}):=\inf\{\a\in\eR:\sum_{j=1}^\ty\ell_j^\a<\ty\}$ coincides with the Minkowski dimension, $\dim{\mathcal L}$, of the fractal string whenever the fractal string $\mathcal{L}$ is infinite, i.e., whenever $(\ell_j)_{j\in\eN}$ is an infinite sequence of positive numbers tending to zero.\footnote{Since $\mathcal{L}$ is bounded, we then always have that $0\leq\dim\mathcal{L}\leq 1$ and hence, similarly for $D(\zeta_{\mathcal{L}})=\dim\mathcal{L}$.}
 The notions of abscissa of absolute convergence and of meromorphic continuation can be extended to general Dirichlet-type integrals; see \cite[esp., Appendix A]{fzf} for details.

It is easy to verify that $\zeta_{\g\mathcal L}(s)=\g^s\zeta_{\mathcal L}(s)$ and $\zeta_{\mathcal L_1\otimes\mathcal L_2}(s)=\zeta_{\mathcal L_1}(s)\cdot \zeta_{\mathcal L_1}(s)$, for all $s\in\Ce$ with $\re s>1$; see \cite[Lemma 3.3.2]{fzf}. Furthermore, $\zeta_{\mathcal L_1\sqcup\mathcal L_2}(s)=\zeta_{\mathcal L_1}(s)+\zeta_{\mathcal L_2}(s)$, for all $s\in\Ce$ with $\re s>1$.

For any given real number $\a$, we define the corresponding {\em vertical line} $\{\re s=\a\}:=\{s\in\Ce:\re s=\a\}$ in the complex plane, while the corresponding {\em open right half-plane} $\{s\in\Ce:\re s>\a\}$ is denoted by $\{\re s > \a\}$. For any two real numbers $\a$ and $\b$, we define $\a+\b\I\Ze:=\{\a+\b\I j\in\Ce:j\in\Ze\}$, which is an arithmetic set contained in the vertical line $\{\re s=\a\}$ of the complex plane.
Here and thereafter, we let $\I:=\sqrt{-1}$ denote ``the'' complex square root of $-1$.

\begin{remark}[{{\em Geometric realization of bounded fractal strings}}]
	\label{rem:1.1}
	A natural way in which bounded fractal strings arise is as follows (see \cite{lapidusfrank12} and the earlier references).
	Consider an open set $\O$ of $\eR$, with boundary denoted by $\partial\O$ and with finite length (i.e., one-dimensional Lebesgue measure) $|\Omega|_{1}$.\footnote{The boundary $\partial\O\subseteq\eR$ is always compact and, in the applications, is often a ``fractal'' subset of $\eR$; see, e.g., Example \ref{ex:1.2}, where $\partial\O$ is the classic (ternary) Cantor set.}
	Then, $\O=\bigcup_{j\geq 1}I_j$, where the (finite or countable) family $(I_j)_{j\geq 1}$ consists of bounded open intervals $I_j$ of lengths $\ell_j$.
	These intervals are simply the connected components of the open set $\O$.
	Without loss of generality and since $|\O|_1=\sum_{j\ge1} \ell_j<\infty$ (because the fractal string $\mathcal{L}:=(\ell_j)_{j\in\eN}$ is bounded), one may assume that $(\ell_j)_{j\in\eN}$ is nonincreasing and (when the sequence is infinite) $\ell_j\to 0$ as $j\to\infty$.
	(In the sequel, we will not always assume that $(\ell_j)_{j\in\eN}$ has been written in nonincreasing order.)
	We note that any choice of open set $\O\stq\eR$ satisfying the above properties is called a {\em geometric realization} of $\mathcal L$.
	
	Conversely, given a bounded fractal string $(\ell_j)_{j\in\eN}$, there are many different ways to associate to it an open set $\O$ of finite length and such that $|\O|_1=\sum_{j\geq 1}\ell_j$.
	There is, however, a canonical way to do so; see \cite[pp.\ 88--89]{fzf}.
	
	We close this remark by recalling that if $\mathcal L$ is an infinite sequence of positive numbers, then $D(\zeta_{\mathcal L})$ coincides with the (upper) Minkowski dimension of $\mathcal L$ (i.e., of $\pa\O$, for any choice of geometric realization of $\mathcal L$, in the above sense; see 
	\cite[Theorem 1.10]{lapidusfrank12}). %Theorem 2.1
\end{remark}

\medskip

\begin{example}[{\it The Cantor string}]
	\label{ex:1.2}
	A typical example of a bounded fractal string is the {\em Cantor string};
	see \cite[Chapter 1]{lapidusfrank12}.
	In the sense of Remark \ref{rem:1.1} just above, it is associated with the bounded open subset of $\eR$ given by $\O_{CS}:=[0,1]\setminus C$, the complement in $[0,1]$ of the ternary Cantor set $C$, and consists of the ``middle-thirds'' (that is, of all the deleted intervals in the standard construction of the Cantor set $C$).
	Then, $\mathcal{L}={\mathcal L}_{CS}:=(\ell_j)_{j\geq 1}$ consists of the following infinite sequence 
	\begin{equation}
		\label{eq:3.1/4}
		\frac{1}{3},\frac{1}{9},\frac{1}{9},\frac{1}{27},\frac{1}{27},\frac{1}{27},\frac{1}{27},\ldots,
	\end{equation}
	where $3^{-j}$ appears with the multiplicity $2^{j-1}$ (for $j=1,2,\ldots$).
	
	Observe that the boundary of the Cantor string is the classic Cantor set $C$: $\partial\O_{CS}=C$.
	
	Finally, a simple computation (based on \eqref{eq:3} and \eqref{eq:3.1/4} and followed by an application of the principle of analytic continuation), shows that $\zeta_{\mathcal{L}}=\zeta_{\mathcal{L}}(s)$ (also denoted by $\zeta_{CS}(s)$) is meromorphic in all of $\Ce$ and is given by
	\begin{equation}
		\label{eq:3.1/2}
		\zeta_{\mathcal{L}}(s)=\frac{1}{3^s-2}=\frac{3^{-s}}{1-2\cdot 3^{-s}},
	\end{equation}
	for all $s\in\Ce$; see \cite[Subsection 1.2.2]{lapidusfrank12}.
\end{example}

\section{Paramorphic functions and their paramorphic continuations}

It has been noticed that there are (nontrivial) bounded fractal strings without any complex dimensions in the classical sense (viewed as poles of a meromorphic extension of the associated geometric zeta function).
As an example, see the fractal string $\mathcal{L}_{\infty}$ constructed in \cite[Example 3.3.7 on p.\ 215]{fzf} or in \cite{ra1}.
In this case, the geometric zeta function $\zeta_{\mathcal{L}}$ does not have any poles but has essential singularities.
Therefore, there is a natural need to extend the notion of complex dimensions, in order to include essential singularities as well.
To achieve this, we need a more general definition of an extension (of a geometric zeta function)
than just a meromorphic extension to an open right half-plane (or some more general domain) of the complex plane.
This leads in a natural way to the notions of paramorphic extensions and paramorphic functions.
An additional justification is provided by the fact that singularities which are not poles (of a fractal zeta function) also have a natural geometric meaning in our context because like the poles, they often contribute to the corresponding fractal tube formula; see \cite{log}.

% In the theory of complex dimensions of fractal strings, it is important to use some basic results about meromorphic functions and their meromorphic continuations. Since in this paper we deal with arbitrary isolated singularities of fractal zeta functions (i.e., not only with removable singularities and poles, but also with essential singularities), we have to introduce a larger class of functions than that of meromorphic functions. Such functions are said to be {\em paramorphic}. More precisely, we introduce the following definition.

\begin{defn}\label{para}
Let $U$ be a nonempty connected open subset of the complex plane, and let $S:=\{s_k:k\in J\}$ be a subset (possibly empty) of isolated points of $U$.\footnote{In particular, the set $S$ does not have an accumulation point in $U$. Here, the set $J$ denotes an an arbitrary index set.}
Let $f:U\setminus S\to\Ce$ be a holomorphic function. Then, we say in short that the function $f$ is {\em paramorphic} in~$U$.
\end{defn}

\begin{remark}\label{apriori}
The set $S$ appearing in Definition \ref{para} is clearly at most countable, and the set of (possible) accumulation points of $S$ is contained in the topological boundary $\pa U$ of $U$. Indeed, since we assume the function $f:U\setminus S\to\Ce$ to be holomorphic, then the set $U\setminus S$ must a priori be open. In other words, the set $S$ is closed with respect to the relative topology of $U$.
\end{remark}

Obviously, all meromorphic functions are automatically paramorphic but the converse is, of course, not true.

\begin{example}\label{paramex}
	The function $f(z)=\E^{1/(z-z_0)}$ is paramorphic in $\mathbb{C}$. Here, $z_0\in\Ce$ is the only singularity of $f$, and it is essential.
\end{example}

\begin{lemma}\label{Sclosed}
Assume that a complex-valued function $f$ is paramorphic $($in the sense of Definition \ref{para}$)$ on a nonempty connected open subset $U$ of the complex plane. Then the set $S=S(f)$ of its nonremovable isolated singularities contained in $U$ $($i.e., the set of poles and essential singularities of $f$ contained in $U$$)$ is closed with respect to the relative topology of $U$.
\end{lemma}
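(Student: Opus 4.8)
The plan is to show that $S(f)$ contains all of its accumulation points that lie in $U$, which is precisely the statement that $S(f)$ is closed in the relative topology of $U$. So suppose, for contradiction, that $s_0\in U$ is an accumulation point of $S(f)$ with $s_0\notin S(f)$. Since $s_0\in U\setminus S(f)$ and $f$ is holomorphic on $U\setminus S(f)$, there is an open disk $B(s_0,r)\subseteq U$ on which $f$ is holomorphic, hence in particular $B(s_0,r)\cap S(f)$ must be empty (as $f$ is holomorphic and thus single-valued and finite at every point of this disk, none of these points can be a pole or essential singularity of $f$). But that contradicts the assumption that $s_0$ is an accumulation point of $S(f)$, since no points of $S(f)$ would then lie in $B(s_0,r)\setminus\{s_0\}$.

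First I would recall the setup from Definition \ref{para}: $f\colon U\setminus S\to\Ce$ is holomorphic for some set $S$ of isolated points of $U$ with no accumulation point in $U$. The key observation, already noted in Remark \ref{apriori}, is that $U\setminus S$ is open precisely because $S$ is relatively closed in $U$; so the isolated singularities actually occurring in the ``canonical'' set $S(f)$ — the poles and essential singularities, i.e. the nonremovable ones — form a subset of $S$. The only subtlety is that $S(f)$ might be a proper subset of $S$ (some points of $S$ could be removable singularities of $f$), so one must check that discarding removable singularities cannot create a new accumulation point inside $U$. But a removable singularity $s'\in S$ is, after filling in the value, a point at which $f$ extends holomorphically, so the argument above applies verbatim with $s_0=s'$: a whole punctured neighborhood of $s'$ lies in $U\setminus S(f)$, confirming $s'$ is not an accumulation point of $S(f)$ either.

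Assembling these observations: for any $s_0\in U$, if $s_0$ is not a pole or essential singularity of $f$, then $f$ is holomorphic in a neighborhood of $s_0$ (either because $s_0\in U\setminus S$, or because $s_0\in S$ is a removable singularity), and that neighborhood meets $S(f)$ only possibly at $s_0$ itself — hence $s_0$ is not in the closure of $S(f)\setminus\{s_0\}$. Therefore every accumulation point of $S(f)$ lying in $U$ must belong to $S(f)$, which means $S(f)$ is relatively closed in $U$. I do not anticipate a genuine obstacle here; the only point requiring care is the bookkeeping between the ``declared'' singular set $S$ of Definition \ref{para} and the intrinsic set $S(f)$ of nonremovable singularities, and the resolution is simply that holomorphicity (possibly after removing a removable singularity) on a neighborhood rules out nearby nonremovable singularities.
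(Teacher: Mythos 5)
Your proof is correct and follows essentially the same route as the paper's: argue by contradiction, take an accumulation point $s_0\in U$ of $S(f)$ not lying in $S(f)$, observe that $f$ is holomorphic (after possibly filling in a removable singularity) on a full neighborhood of $s_0$, and conclude that no points of $S(f)$ can approach $s_0$. You are somewhat more careful than the published proof in distinguishing the declared singular set $S$ from Definition \ref{para} from the intrinsic set $S(f)$ of nonremovable singularities and in explicitly treating the case $s_0\in S\setminus S(f)$; the paper compresses this to the bare assertion that $f$ is holomorphic at $s_0$, leaving the removable-singularity bookkeeping implicit.
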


\begin{proof}
Assume, contrary to the claim, that the set $S$ is not closed. Then there exists $s_0\in(\mbox{\rm Cl}\,S\setminus S)\cap U$. On the one hand, $f$ is holomorphic at $s_0$, since $s_0\in U\setminus S$. On the other hand, there is a sequence $(s_k)_{k\ge1}$ of nonremovable singularities of $f$ converging to $s_0$ as $k\to\ty$, which is impossible. This proves the lemma.
\end{proof}

As we can see, by saying that a complex-valued function $f$ is paramorphic on $U$, we mean that $f:U\to\Ce$ is differentiable (i.e.,
holomorphic) at all points of $U$ except on a subset $S$ of isolated singularities of $f$. Each $s_0\in S$ is either a removable singularity, or a pole, or an isolated essential singularity. If we exclude removable singularities from the set $S$, then $S$ is uniquely determined by $f$, consisting of its poles and isolated essential singularities contained in $U$.

For a fixed nonempty connected open subset $U$ of the complex plane, the vector space of all functions paramorphic on $U$ is denoted by ${\rm Par}(U)$.

\begin{remark}
	We point out that the notion of a paramorphic function is closely related to the class $\mathcal{S}$ of functions introduced by A.\ Bolsch in \cite{Bol1,Bol2} (see also the class $\mathcal{K}$ from \cite{Bak,Dom}) for studying iterations of complex maps (from the dynamical perspective) which are meromorphic except in a ``small'' set.
	Namely, a function $f:\overline{\Ce}\to\overline{\Ce}$ is said to be in the class $\mathcal{S}$ if there exists a closed countable set $A(f)\subseteq\overline{\Ce}$ such that $f$ is meromorphic in $\overline{\Ce}\setminus A(f)$ but in no proper superset.\footnote{Here, $\overline{\Ce}$ denotes the Riemann sphere, i.e., the one-point compactification of $\Ce$: $\overline{\Ce}:=\Ce\cup\{\infty\}$.}
	
	The above definition is more general than the definition of a paramorphic function since the set $A(f)$ may also contain non-isolated singularities that arise as accumulation points of isolated singularities of $f$.
	On the other hand, a paramorphic function $f:U\to\Ce$ cannot have any non-isolated singularities in the open domain $U\subseteq\Ce$.
	In the general theory of complex dimensions, we conjecture that only isolated singularities of fractal zeta functions should be considered as ``proper'' complex dimensions of the associated fractal set.
	A strong indication of this is the fact that they have a direct geometric meaning since these complex dimensions appear as co-exponents in the asymptotics of the fractal tube formula of the given set, whereas the non-isolated singularities are a kind of a byproduct of the isolated ones, i.e., of the ``proper'' complex dimensions.
\end{remark}

%${\mathcal P}_\ty(U)$.

\begin{defn}\label{parae}
Assume that $U$ and $V$ are connected open subsets of the complex plane, and $f\in{\rm Par}(U)$, $g\in{\rm Par}(V)$.
If $U\stq V$ and $g|_U=f$ (except for the set of isolated singularities of $f$), we say that $g$ is a {\em paramorphic extension} of $f$.
\end{defn}

\begin{remark}
In Definition \ref{parae}, by writing $g|_U=f$, we mean that in fact $g|_{U\setminus\, S}=f$, where $S=S(f)$ is the set of isolated singularities of $f$.
As in Remark~\ref{apriori} and Lemma~\ref{Sclosed}, the set $S(f)$ is closed in the relative topology of $U$, since $f|_{U\setminus\, S(f)}$ is holomorphic.
\end{remark}

\begin{remark}
If by ${\rm Hol}(U)$ and ${\rm Mer}(U)$ we denote the vector spaces of functions which are, respectively, holomorphic and meromorphic  on a nonempty connected open subset $U$ of $\Ce$, then ${\rm Hol}(U)\stq{\rm Mer}(U)\stq{\rm Par}(U)$.
\end{remark}

The following result shows that a paramorphic extension $g\in{\rm Par}(V)$ of $f\in{\rm Par}(U)$ in Definition \ref{parae}
is uniquely determined by $f$.

\begin{theorem}[Unique paramorphic continuation principle]\label{cp} Let $U$ and $V$ be non\-empty connected open subsets of the complex plane $\Ce$ and $U\stq V$. If $g_1,g_2\in{\rm Par}(V)$ and $g_1|_U=g_2|_U$, then $g_1=g_2$. In other words, the sets of nonremovable isolated singularities of $g_1$ and $g_2$ coincide, and $g_1=g_2$ on the complement of their common set of singularities in~$V$. 
\end{theorem}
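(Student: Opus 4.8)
The plan is to reduce the statement to the classical identity theorem for holomorphic functions. First I would consider the function $h := g_1 - g_2$, which a priori is only defined away from $S(g_1) \cup S(g_2)$; since both $g_1$ and $g_2$ are paramorphic on $V$, this set of singularities is at most countable and consists of isolated points of $V$ (by Remark \ref{apriori} and Lemma \ref{Sclosed}), so $V \setminus (S(g_1)\cup S(g_2))$ is a nonempty connected open subset of $\Ce$. The hypothesis $g_1|_U = g_2|_U$ means precisely that $h$ vanishes on $U \setminus (S(g_1)\cup S(g_2))$, which is a nonempty open subset of $V \setminus (S(g_1)\cup S(g_2))$ (here one must check $U \not\subseteq S(g_1)\cup S(g_2)$, which is immediate since $U$ is open and nonempty while the singularity set is discrete). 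By the classical identity theorem, $h \equiv 0$ on the connected open set $V \setminus (S(g_1)\cup S(g_2))$; that is, $g_1$ and $g_2$ agree on the complement in $V$ of their combined singularity sets.

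The remaining — and genuinely more delicate — step is to show that $g_1$ and $g_2$ have the \emph{same} set of nonremovable isolated singularities, so that the conclusion $g_1 = g_2$ holds in the sense of Definition \ref{parae}. Suppose $s_0 \in S(g_1)$ but $s_0 \notin S(g_2)$; then $g_2$ is holomorphic at $s_0$, hence bounded in a punctured neighborhood $B(s_0,r)\setminus\{s_0\}$ of $s_0$, which we may take small enough to contain no other point of $S(g_1)\cup S(g_2)$. On that punctured disc $g_1 = g_2$ by the previous paragraph, so $g_1$ is also bounded near $s_0$; by the Riemann removable singularity theorem, $s_0$ is a removable singularity of $g_1$, contradicting $s_0 \in S(g_1)$ (recall that $S$ consists of \emph{nonremovable} isolated singularities — poles and essential singularities). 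Hence $S(g_1) \subseteq S(g_2)$, and by symmetry $S(g_1) = S(g_2)$.

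I expect the main obstacle to be purely bookkeeping rather than conceptual: one must be careful that the union of the two discrete singularity sets is still discrete in $V$ (so that $V$ minus it is still open, connected, and nonempty), and that "paramorphic extension / continuation" is interpreted consistently with Definition \ref{parae}, i.e., the equality $g_1|_U = g_2|_U$ is understood up to the isolated singularities. With these points addressed, combining the identity theorem on the common holomorphic domain with the removable-singularity argument for the singular points yields $g_1 = g_2$ as elements of ${\rm Par}(V)$, completing the proof.
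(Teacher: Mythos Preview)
Your proposal is correct and follows essentially the same strategy as the paper: reduce to the identity theorem on the complement of the singular set, then argue that the singularity sets must coincide. The only organizational difference is that the paper works with $S=S(g_1)$ alone, uses analytic continuation to conclude that $g_2$ agrees with $g_1$ on $V\setminus S$ (hence $S(g_2)\subseteq S(g_1)$), and then invokes symmetry; you instead remove $S(g_1)\cup S(g_2)$ from the outset and handle the singularity comparison via Riemann's removable singularity theorem. Your version is arguably a bit more explicit at the point where one must justify that a purported singularity of one function but not the other is in fact removable, and your remark that the union of two closed discrete subsets of $V$ is again closed and discrete (so the complement remains open and connected) fills in a bookkeeping detail that the paper leaves implicit.
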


\begin{proof}
%Since $g_1-g_2\equiv 0$ on $U\setminus S$, where $S$ is the common set of %singularities of $g_1$ and $g_2$ in~$V$, and the set $V\setminus S$ is %connected and open, then by the classical extension theorem (i.e., the %principle of analytic continuation), we have that $g_1-g_2\equiv0$ on 
%$V\setminus S$; i.e., $g_1 = g_2$.
Let $S=S(g_1)$ be the set of nonremovable isolated singularities of $g_1$.
Then, according to Definition~\ref{para}
(and Remark~\ref{apriori} along with Lemma~\ref{Sclosed}), $U\setminus S$ is an open set and $g_1$ is holomorpohic in all of
$V\setminus S$. Therefore, since $g_2$ coincides with $g_1$ on $U\setminus S$, and since $V\setminus S$ is
open and connected, it follows from the principle of analytic
continuation that $g_2$ coincides with $g_1$ on all of $V\setminus S$. As a result,
$g_2$ is holomorphic in all of $V\setminus S$ and hence, $S(g_2)$ is contained in
$S(g_1) = S$.

Now, by the symmetry of the hypotheses on $g_1$ and $g_2$ (in the
statement of Theorem \ref{cp}), we could apply the same reasoning by
interchanging the roles of $g_1$ and $g_2$ and conclude that  $S(g_1)$ is
also contained in $S(g_2)$. Hence, $g_1$ and $g_2$ have a common set of
nonremovable singularities $S$, and $g_1$ and $g_2$ coincide on $V\setminus S$; that is, $g_1 =
g_2$.
\end{proof}

We also provide the following result, which shows that the set ${\rm Par}(U)$ of paramorphic functions on a given connected open subset $U\stq\Ce$ is closed under multiplications; i.e., it is an algebra.

\begin{theorem}
Let $U$ be a given connected open subset of the complex plane. Then, the set of paramorphic functions ${\mathrm Par}(U)$ is a unital algebra $($with respect to pointwise multiplication$)$.
\end{theorem}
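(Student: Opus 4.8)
The plan is to verify the only two points not already recorded: that $\mathrm{Par}(U)$ contains a multiplicative identity and that it is closed under pointwise multiplication. The vector space structure is given, and commutativity, associativity and distributivity of the product are inherited pointwise from $\Ce$; so these two checks suffice. The unit is immediate: the constant function $1$ is holomorphic, hence paramorphic, on $U$ (with empty singularity set), and $1\cdot f=f$ for every $f\in\mathrm{Par}(U)$.

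The substance is therefore the closedness under products. Given $f,g\in\mathrm{Par}(U)$, let $S(f)$ and $S(g)$ denote their sets of nonremovable isolated singularities in $U$; by Lemma~\ref{Sclosed} each is closed in the relative topology of $U$, and by Definition~\ref{para} together with Remark~\ref{apriori}, neither has an accumulation point in $U$, so each is a discrete closed subset of $U$. I would then set $S:=S(f)\cup S(g)$ and check that $S$ is again a discrete closed subset of $U$: it is closed as a finite union of closed sets, and if $s_0\in S$ then, since $S(f)$ is closed and discrete, $s_0$ has a punctured neighborhood in $U$ disjoint from $S(f)\setminus\{s_0\}$; intersecting with the analogous punctured neighborhood for $S(g)$ yields a punctured neighborhood of $s_0$ disjoint from $S\setminus\{s_0\}$, so $s_0$ is isolated in $S$ and $S$ has no accumulation point in $U$. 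Consequently $U\setminus S$ is open; on it both $f$ and $g$ are holomorphic, since $U\setminus S\subseteq U\setminus S(f)$ and $U\setminus S\subseteq U\setminus S(g)$, and hence so is $fg$. Thus $fg$ is holomorphic on $U\setminus S$ for a discrete set $S$ with no accumulation point in the connected open set $U$, which is exactly the statement that $fg$ is paramorphic in $U$ in the sense of Definition~\ref{para}, with $S(fg)\subseteq S$.

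With the unit and the closure under products in hand, all axioms of a unital commutative algebra are verified, completing the proof. I expect the only point genuinely requiring care to be the verification that $S(f)\cup S(g)$ still has no accumulation point in $U$ — that is, that multiplying two paramorphic functions cannot manufacture a non-isolated singularity inside the domain; this rules out, for instance, poles of $f$ and essential singularities of $g$ interlacing so as to accumulate at an interior point, and it rests entirely on the fact that $S(f)$ and $S(g)$ are each already discrete and closed in $U$. Note that cancellation may make $S(fg)$ a proper subset of $S$ (as with $\E^{1/(z-z_0)}\cdot\E^{-1/(z-z_0)}\equiv 1$), but this only strengthens the conclusion and causes no difficulty, since Definition~\ref{para} merely requires holomorphy off \emph{some} admissible discrete set.
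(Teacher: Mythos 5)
Your proposal is correct and follows the same route as the paper: take the unit to be the constant function $1$, and for the product show that $f_1 f_2$ is holomorphic on $U\setminus(S(f_1)\cup S(f_2))$ and hence paramorphic (allowing for possible cancellations). The one place you go beyond the paper's proof is in explicitly verifying that $S(f_1)\cup S(f_2)$ remains a closed discrete subset of $U$ with no accumulation point in $U$; the paper treats this as immediate, so your version is a slightly more careful rendering of the same argument.
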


\begin{proof}
The unit element in this algebra is, of course, the function $1\in{\rm Par}(U)$ defined by $1(s)=1$ for all $s\in U$.
For $f_1,f_2\in{\rm Par}(U)$, it is easy to see that also $f_1\cdot f_2\in{\rm Par}(U)$. Namely, if $f_j:U\setminus S_j\to\Ce$, $j=1,2$, are two holomorphic functions, where $S_j$ are the corresponding sets of isolated singularities of $f_j$, for $j=1,2$, then the product $f_1\cdot f_2$ is well defined and holomorphic on $U\setminus(S_1\cup S_2)$. (Here, some elements of $S_1\cup S_2$ may be removable singularities of $f_1\cdot f_2$, due to possible cancellations.) Hence, according to Definition~\ref{para}, the product $f_1\cdot f_2$ is paramorphic on $U$.
\end{proof}

In the following definition, we introduce the notion of the `abscissa of paramorphic continuation' of a given paramorphic function, which is analogous to that of the `abscissa of meromorphic continutation' of a given meromorphic function.

\begin{defn}\label{D_par}
Let $\a$ be a real number and let $\{\re s>\a\}$ be the corresponding open right half-plane in $\Ce$. Assume that $f:\{\re s>\a\}\to\Ce$ is a Dirichlet-type function (or, in short, DTI; see, e.g., \cite{fzf}, esp., Appendix A), such that $f$ is paramorphic on $\{\re s>\a\}$, for some $\a\in\eR$.\footnote{Otherwise, we let $D_{\rm par}(f)=+\infty$, which means that $f$ cannot be paramorphically extended to any (nonempty) right half-plane.} The {\em abscissa of paramorphic continuation $D_{\rm par}(f)$} of $f$ is defined as the infimum of all real numbers $\beta$, with $\beta\le\a$, such that $f$ can be paramorphically extended from $\{\re s>\a\}$ to $\{\re s>\b\}$.\footnote{We also allow for $D_{\rm par}(f)=-\infty$, which means that $f$ can be paramorphically extended to all of $\Ce$.} Equivalently, 
$\{\re s>D_{\rm par}(f)\}$ is the largest open right half-plane, to which $f$ can be paramorphically extended. (It is easy to deduce from Theorem~\ref{cp} that this notion is well defined.)\footnote{Indeed, if $f$ is paramorphic on each element of a family of right half-planes, $\{\re s>\alpha_i\}_{i\in I}$, then (by Theorem \ref{cp}) it is paramorphic on the union of these right-half planes, namely, on the right-half plane $\{\re s>\alpha\}$, where $\alpha:=\inf_{i\in I}\alpha_i$.} Clearly, 
$$
-\ty\le D_{\rm par}(f)\le D_{\rm mer}(f)\le D(f)\leq +\infty.
$$
Furtermore, given $D_{\ty}\in\eR$, the vertical line $\{\re s=D_\ty\}$ is said to be a {\em paramorphic barrier} of $f$ if $f$ cannot be paramorphically continued to a connected open set $V$ containing the open right half-plane $\{\re s>D_\ty\}$ as a proper subset.
\end{defn}

If $f$ is a DTI of the form of a geometric zeta function, i.e., $f=\zeta_{\mathcal L}$ for some bounded fractal string $\mathcal L$, then clearly, $0\le D_{\rm par}(f)\le D_{\rm mer}(f)\le D(f)\le 1$. It is also clear that the notion of a paramorphic barrier, introduced in Definition~\ref{D_par} above, can be extended to a much more general setting.
\medskip

We are now ready to state the main result of this paper.

\begin{theorem}\label{main}
Let $D_{\ty}$, $D_1$ and $D$ be three prescribed real numbers belonging to the interval $[0,1]$ and such that $D_\ty< D_1\le D$. Then, there exists an explicit $($i.e., explicitly constructible$)$  bounded fractal string $\mathcal L$ such that the corresponding geometric zeta function $\zeta_{\mathcal L}$ can be paramorphically extended to the open right half-plane $\{\re s>D_{\ty}\}$ and
\begin{equation}
D_{\rm par}(\zeta_{\mathcal L})=D_\ty,\q
D_{\rm mer}(\zeta_{\mathcal L})=D_1,\q
D(\zeta_{\mathcal L})=\dim{\mathcal L}=D.
\end{equation}
In addition to this, it can be achieved that the line $\{\re s=D_\ty\}$ coincides with the paramorphic barrier of $\zeta_{\mathcal L}$ $($in the sense of Definition \ref{D_par} above$)$, while the vertical open strip $\{D_\ty<\re s<D_1\}$ contains infinitely many essential singularities of $\zeta_{\mathcal L}$, and such that the paramorphic barrier coincides with the set of accumulation points of the set of essential singularities of $\zeta_{\mathcal L}$.
\end{theorem}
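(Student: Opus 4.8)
The plan is to realize $\mathcal L$ as a disjoint union $\mathcal L=\bigsqcup_{k\ge 0}\mathcal L_k$ of countably many generalized Cantor strings. The block $\mathcal L_0$ is taken to be an ordinary $N_0$\nobreakdash-piece Cantor-type string of dimension $D$, so that $\zeta_{\mathcal L_0}$ is meromorphic in all of $\Ce$, has $D(\zeta_{\mathcal L_0})=\dim\mathcal L_0=D$, and has all its poles on the line $\{\re s=D\}$. For $k\ge 1$, each $\mathcal L_k$ is a \emph{twisted} generalized Cantor string, built so that $\zeta_{\mathcal L_k}(s)=\Psi_k(r_k^{\,s})$, where $\Psi_k(z)=\sum_{L\ge 1}m_{k,L}z^L$ has coefficients $m_{k,L}:=\lfloor a_{k,L}\rfloor\in\eN$ and the $a_{k,L}\ge 0$ are the Taylor coefficients about $z=0$ of $\E^{z/(1-N_k z)}-1$. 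The first task is to record the properties of a single such block. Since $\{a_{k,L}\}\in[0,1)$, the error series $E_k(z):=\bigl(\E^{z/(1-N_k z)}-1\bigr)-\Psi_k(z)=\sum_{L}\{a_{k,L}\}z^L$ has radius of convergence $\ge 1$, hence $E_k(r_k^{\,s})$ is holomorphic on $\{\re s>0\}$ (as $r_k<1$); since $z=1/N_k$ is an essential singularity of $\E^{z/(1-N_kz)}$ and $1/N_k<1$ lies in the disc of holomorphy of $E_k$, it follows that $\zeta_{\mathcal L_k}$ is paramorphic on $\{\re s>0\}$, with $D(\zeta_{\mathcal L_k})=d_k:=\log N_k/\log(1/r_k)$ and with essential singularities forming exactly the arithmetic progression $P_k:=d_k+\tfrac{2\pi}{\log(1/r_k)}\,\I\Ze$ on the line $\{\re s=d_k\}$ (at each point of $P_k$, $s\mapsto r_k^{\,s}$ is a local biholomorphism onto a neighbourhood of $1/N_k$). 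The crucial quantitative fact is the uniform estimate $|\zeta_{\mathcal L_k}(s)|\le C\,r_k^{\re s}$, valid for $\re s\ge\delta$ and $s$ away from $P_k$, coming from $\Psi_k(z)=z+O(z^2)$ and $\E^{w}-1=O(|w|\E^{|w|})$.

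Next I would fix the parameters. Choose integers $N_k\ge 2$ and ratios $r_k\in(0,1/N_k)$ so that $d_1=D_1$, the sequence $d_k$ decreases strictly to $D_\ty$, and the spacings $2\pi/\log(1/r_k)\to 0$ (when $D_\ty>0$ this forces $N_k\to\ty$; when $D_\ty=0$ one may keep $N_k\equiv 2$). Because $d_k\le D_1$, one may let $N_k$ grow as fast as desired, making $r_k=N_k^{-1/d_k}\to 0$ faster than any geometric sequence; hence $\sum_{k\ge 1}|\mathcal L_k|_1<\ty$ and, with $|\mathcal L_0|_1<\ty$, the multiset $\mathcal L:=\bigsqcup_{k\ge 0}\mathcal L_k$ is a genuine (and explicitly constructible) bounded fractal string, with $\zeta_{\mathcal L}=\sum_{k\ge 0}\zeta_{\mathcal L_k}$ on $\{\re s>1\}$ by additivity of geometric zeta functions under disjoint unions.

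I would then show that $\sum_{k\ge 0}\zeta_{\mathcal L_k}$ is the paramorphic continuation of $\zeta_{\mathcal L}$ to $\{\re s>D_\ty\}$. For a compact $K\subset\{\re s>D_\ty\}$, say $K\subset\{\re s\ge\delta\}$ with $\delta>D_\ty$: since $d_k\downarrow D_\ty$, only finitely many lines $\{\re s=d_k\}$ meet $K$, so $K$ avoids $P_k$ for all large $k$, and there the uniform estimate gives $|\zeta_{\mathcal L_k}(s)|\le Cr_k^{\delta}$ on $K$ with $\sum_k r_k^{\delta}<\ty$; hence the tail converges uniformly on $K$ to a holomorphic function, while the finitely many remaining terms (including $\zeta_{\mathcal L_0}$) are meromorphic near $K$ with isolated essential singularities only at the finitely many points of $\bigl(\bigcup_k P_k\bigr)\cap K$. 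Exhausting $\{\re s>D_\ty\}$ by such $K$ shows $\zeta_{\mathcal L}\in{\rm Par}(\{\re s>D_\ty\})$ with essential-singularity set exactly $S_\ty:=\bigcup_{k\ge 1}P_k$ (none cancels, since near $p\in P_k$ every other summand is holomorphic and the corresponding sub-sum converges, by the same estimate). The three abscissae now follow: $\sum_k\zeta_{\mathcal L_k}(\alpha)<\ty$ exactly for $\alpha>\max(D,\sup_k d_k)=D$, so $D(\zeta_{\mathcal L})=\dim\mathcal L=D$; $\zeta_{\mathcal L}$ is holomorphic on $\{\re s>D_1\}$ except for the poles of $\zeta_{\mathcal L_0}$ on $\{\re s=D\}$, hence meromorphic there, whereas $P_1\subset\{\re s=D_1\}$ consists of essential singularities, so no meromorphic continuation reaches any $\{\re s>D_1-\varepsilon\}$, giving $D_{\rm mer}(\zeta_{\mathcal L})=D_1$; and since the spacing of $P_k$ tends to $0$ while $d_k\to D_\ty$, the set $S_\ty$ is dense in $\{\re s=D_\ty\}$, so $\zeta_{\mathcal L}$ admits no paramorphic continuation across that line, i.e.\ $D_{\rm par}(\zeta_{\mathcal L})=D_\ty$ and $\{\re s=D_\ty\}$ is the paramorphic barrier. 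Finally, any accumulation point of $S_\ty$ has real part equal to a limit point of $\{d_k\}_{k\ge 1}$, hence equal to $D_\ty$ (each $d_k$ is isolated in $\{d_k\}\cup\{D_\ty\}$ and each $P_k$ is discrete), and conversely every point of $\{\re s=D_\ty\}$ is such an accumulation point; thus the set of accumulation points of $S_\ty$ coincides with the paramorphic barrier, while $\bigcup_{k\ge 2}P_k\subset\{D_\ty<\re s<D_1\}$ provides the required infinitely many essential singularities in the open strip.

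I expect the main obstacle to lie in the construction and analysis of the twisted block $\mathcal L_k$: namely, verifying that replacing the non-integer coefficients $a_{k,L}$ by $\lfloor a_{k,L}\rfloor$ preserves both the radius of convergence $1/N_k$ and the essential (rather than polar or branch-type) nature of the singularity at $z=1/N_k$, and obtaining the estimate $|\zeta_{\mathcal L_k}(s)|\le Cr_k^{\re s}$ with a constant that does not deteriorate as $k\to\ty$. A secondary delicate point is the bookkeeping guaranteeing that $S_\ty$ accumulates only on $\{\re s=D_\ty\}$; this hinges on arranging $d_k\downarrow D_\ty$ with $d_1=D_1$, and in particular the endpoint $D_1=1$ (where a genuine dimension-one block must replace the self-similar $\mathcal L_0$ and $\mathcal L_1$) will require a small separate argument.
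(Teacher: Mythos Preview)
Your overall architecture matches the paper's: build $\mathcal L$ as a disjoint union of blocks $\mathcal L_k$ whose zeta functions have essential singularities on vertical lines $\{\re s=d_k\}$ with $d_1=D_1$ and $d_k\downarrow D_\ty$, together with an ordinary Cantor block of dimension $D$ to fix the abscissa of convergence. Where you genuinely diverge is in the construction of the individual blocks. The paper realizes each block as a \emph{generalized Cantor string of infinite order} $\mathcal L^{(m_k,a_k)}_\ty=\bigsqcup_{n\ge1}(n!)^{-1}\bigl(\mathcal L^{(m_k,a_k)}\bigr)^{\otimes n}$, so that $\zeta_{\mathcal L^{(m_k,a_k)}_\ty}(s)=\sum_{n\ge1}(n!)^{-s}(1-m_ka_k^{\,s})^{-n}$, and then rescales by $2^{-k}/L_k$ to force summability; the proof that the sum is paramorphic on $\{\re s>D_\ty\}$ is carried out in an appendix via a case-by-case lower bound on $|1-m_ka_k^{\,s}|$. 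Your blocks are instead single lattice strings with multiplicities $m_{k,L}=\lfloor a_{k,L}\rfloor$, where the $a_{k,L}$ are Taylor coefficients of $\E^{z/(1-N_kz)}-1$; the essential singularity is inherited directly from $\E^{z/(1-N_kz)}$ because the floor-error series has radius $\ge1>1/N_k$. This buys you a cleaner pointwise estimate $|\zeta_{\mathcal L_k}(s)|\le C\,r_k^{\re s}$ (once $N_kr_k^{\,\delta}\le\tfrac12$, which holds for all large $k$ since $N_kr_k^{\,\delta}=N_k^{1-\delta/d_k}\to0$), and hence a shorter Weierstrass-$M$ argument for the tail, at the cost of having to verify that the floor operation preserves both the radius of convergence and the essential nature of the singularity---which you correctly identify as the main point to check, and which follows exactly as you sketch.

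Two small cautions. First, the uniform constant $C$ in your tail estimate really is uniform only for those $k$ with $d_k<\delta$; you correctly split off the finitely many remaining $k$, but be explicit that for those the contribution is paramorphic individually rather than bounded by $Cr_k^{\,\delta}$. Second, your flagged endpoint $D_1=1$ (hence $D=1$) is a genuine boundary case for \emph{both} constructions, since $d_1=1$ forces $N_1r_1=1$; the paper's proof likewise needs $m_ka_k<1$, i.e.\ $D_k<1$, and does not spell out the limiting case, so you are not missing anything the paper supplies.
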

\medskip

We postpone the proof of Theorem~\ref{main} until Section~\ref{mains} (more precisely, until Subsection~\ref{mainss}).

For general references on complex analysis and the singularities of complex-valued functions, we refer, e.g., to \cite{Ahl}, \cite{Conw1,Conw2}, \cite{Ebe}, \cite{Schl} and \cite[esp., Subsection 1.3.2.]{fzf}.

\section{Generalized Cantor strings of finite and infinite orders and their geometric zeta functions}\label{sec:3}

\subsection{Generalized Cantor strings of finite order} Let $r_j$, with $j=1,\dots,m$, be positive real numbers such that $r_1+\dots +r_m<1$.
Let $\mathcal L(r_1,\dots,r_m)$ be the self-similar fractal string defined as the nonincreasing sequence of all monomial terms of the form $r_1^{\a_1}\dots r_m^{\a_m}$, with $(\a_1,\dots,\a_m)\in(\eN\cup \{0\})^m$.
It can be shown (see \cite[Chapters 2 and 3]{lapidusfrank12}) that the corresponding geometric zeta function is given by
\begin{equation}\label{zetaLr}
\zeta_{\mathcal L(r_1,\dots,r_m)}(s)=\frac1{1-\sum_{j=1}^mr_j^s},
\end{equation}
for all $s\in\Ce$.
This is established by first verifying Eq.~\eqref{zetaLr} via a direct computation, valid for all $s\in\Ce$ with $\re s$ sufficiently large,\footnote{Namely, for all $s\in\Ce$ with $\re s>D_{\mathcal{L}}$, where $D_{\mathcal{L}}$ is the Minkowski (or box) dimension of $\mathcal{L}=\mathcal{L}(r_1,\ldots,r_m)$, which, in the present case, coincides with the {\em similarity dimension} of $\mathcal L$, i.e., the unique {\em real} solution of the Moran equation \cite{Mor} (see also, e.g., \cite{Fal}) $\sum_{j=1}^mr_j^s=1$.} and then upon meromorphic continuation, by deducing that \eqref{zetaLr} holds, in fact, for all $s\in\Ce$.

For example, by choosing $m=2$ and $r_1=r_2=1/3$, we obtain the {\em Cantor string} 
$$
\mathcal L(1/3,1/3)=(\ell_j)_{j\in\eN}.
$$ 
It corresponds to the nonincreasing sequence of lengths of deleted open intervals obtained during the construction of the usual Cantor's ternary set $C^{(2,1/3)}$ scaled by the factor $3$, i.e., starting with the interval $[0,3]$ instead of $[0,1]$; see \cite[{{\em ibid}}]{lapidusfrank12} or \cite[Definition 3.3.1 and Theorem 3.3.3]{fzf}. Furthermore, in light of Eq.~\eqref{zetaLr} and in keeping with the above explanations, we see that
$\zeta_{\mathcal L(1/3,1/3)}(s)=1/(1-2\cdot 3^{-s})$ for all $s\in\Ce$ such that $\re s>\log_32$. 
As was explained above in the case of a general self-similar string, the geometric zeta function $\zeta_{\mathcal L(1/3,1/3)}$ can then be  meromorphically extended to the whole complex plane by letting $\zeta_{\mathcal L(1/3,1/3)}(s)=1/(1-2\cdot 3^{-s})$ for all $s\in\Ce$.

Let $m$ be a positive integer such that $m\ge2$, and let $a\in(0,1/m)$. Let us define the {\em generalized Cantor string} 
\begin{equation}\label{Lmas}
{\mathcal L}^{(m,a)}=\mathcal L(\underbrace{a,\dots,a}_{\mbox{\scriptsize$m$ times}}).
\end{equation} 
Here, by using Eq.~\eqref{zetaLr}, we obtain that
\begin{equation}\label{Lmer}
\zeta_{{\mathcal L}^{(m,a)}}(s)=\frac1{1-\sum_{j=1}^ma^s}=\frac1{1-m\cdot a^s},
\end{equation}
for all $s\in\Ce$ with $\re s>\log_{1/a}m$. This geometric zeta function can then be meromorphically extended to the whole complex plane, so that \eqref{Lmer} holds for all~$s\in\Ce$.

For any fixed integer $n\ge1$, we introduce the {\em generalized Cantor string of $n$-th order}, ${\mathcal L}^{(m,a)}_n$, defined inductively by
\begin{equation}
{\mathcal L}^{(m,a)}_1:={\mathcal L}^{(m,a)}\q\mbox{and}\q
{\mathcal L}^{(m,a)}_n:={\mathcal L}^{(m,a)}_{n-1}\otimes {\mathcal L}^{(m,a)}\q\mbox{for\,\, $n\ge2$}.
\end{equation}
In other words, we iterate multiplying ${\mathcal L}^{(m,a)}$ by itself, using the tensor product of fractal strings; that is, for every integer $n\geq 1$, 
\begin{equation}
{\mathcal L}^{(m,a)}_n:=\bigotimes_{j=1}^n {\mathcal L}^{(m,a)}.
\end{equation}
The geometric zeta function of ${\mathcal L}^{(m,a)}_n$ can be explicitly computed (initially, for all $s\in\Ce$ with $\re s$ large enough) and then meromorphically extended to the whole complex plane. We first have
\begin{equation}
\zeta_{{\mathcal L}^{(m,a)}_2}(s)=\zeta_{{\mathcal L}^{(m,a)}_1}(s)\cdot \zeta_{{\mathcal L}^{(m,a)}}(s)=\frac1{1-m\cdot a^s}\cdot\frac1{1-m\cdot a^s}=\frac1{(1-m\cdot a^s)^2}.
\end{equation}
and then by induction, for each $n\ge1$ and all $s\in\Ce$,
\begin{equation}
\zeta_{{\mathcal L}^{(m,a)}_n}(s)=\frac1{(1-m\cdot a^s)^n}.
\end{equation}
Here, we have used the multiplicative property of the geometric zeta function with respect to the tensor products of fractal strings; see \cite[Lemma~3.3.2]{fzf}. The total length of the generalized Cantor string of $n$-th order ${\mathcal L}^{(m,a)}_n$ is given by
\begin{equation}
|{\mathcal L}^{(m,a)}_n|_1=\zeta_{{\mathcal L}^{(m,a)}_n}(1)=\frac1{(1-m\cdot a)^n}.
\end{equation}
Note that $|{\mathcal L}^{(m,a)}_n|_1\to+\ty$ as $n\to\ty$, exponentially fast  as a function of~$n$.

The set of {\em complex dimensions} of the fractal string ${\mathcal L}^{(m,a)}_n$, denoted by $\dim_\Ce{\mathcal L}^{(m,a)}_n$, is defined as the set of poles (in $\Ce$) of the associated geometric zeta function $\zeta_{{\mathcal L}^{(m,a)}_n}$. In this case, the poles of ${\mathcal L}^{(m,a)}_n$ are all of order $n$ (i.e., the complex dimensions of ${\mathcal L}^{(m,a)}_n$ are of multiplicity $n$), and they form an arithmetic sequence contained in the vertical line $\{\re s=\log_{1/a}m\}$ of the complex plane:
\begin{equation}
\dim_\Ce{\mathcal L}^{(m,a)}_n=\log_{1/a}m+\frac{2\pi}{\log (1/a)}\I\Ze.
\end{equation}
The above construction of the fractal string ${\mathcal L}^{(m,a)}_n$, as well as the computation of its geometric zeta function, are a natural extension of the ones provided in \cite[Example 3.3.7 on p.\ 215]{fzf} in the case when $m=2$ and $r_1=r_2=1/3$. For the general theory of the complex dimensions of fractal strings, see \cite{lapidusfrank12} and \cite{fzf}.

It is easy to explicitly compute the coefficients $c^j_l$, with $j\ge1$, appearing in the Laurent expansion
\begin{equation}
\zeta_{{\mathcal L}^{(m,a)}_n}(s)=\sum_{l=-n}^{\ty}c^j_l(s-D_j)^l
\end{equation}
of the geometric zeta function $\zeta_{{\mathcal L}^{(m,a)}_n}$ near any of the poles $s_j:=\log_{(1/a)}m+\frac{2\pi}{\log(1/a)}\I j$ of $\zeta_{{\mathcal L}^{(m,a)}_n}$, for a fixed value of $j\in\Ze$ and for a prescribed integer~$n\ge1$.
For example, we have that
\begin{equation}
\begin{aligned}
c^j_{-n}&:=\lim_{s\to s_j}{(s-s_j)^n}\zeta_{{\mathcal L}^{(m,a)}_n}(s)=
\Big(\lim_{s\to s_j}\frac{s-s_j}{1-m\cdot a^s}\Big)^n\\
&=\Big(\frac1{a^{s_j}\log(1/a)}\Big)^n=\Big(\frac m{\log(1/a)}\Big)^n.
\end{aligned}
\end{equation}
It is interesting to note that the value of $c^j_{-n}$ is, in fact, independent of $j\in\Ze$.

Other coefficients of the form $c_l=c_{-n+r}$, with $r\ge1$, can be easily computed as well, since $c_{-n+r}=\lim_{s\to s_j}\frac{\dd^r}{\dd s^{r}}\Big.\Big|_{s=s_j}[{(s-s_j)^n}\zeta_{{\mathcal L}^{(m,a)}_n}(s)]$.

\subsection{Generalized Cantor strings of infinite order}
Now, we can define the {\em generalized Cantor string of infinite order} as the following infinite disjoint union of scaled generalized Cantor strings of finite orders:
\begin{equation}\label{Lty}
{\mathcal L}^{(m,a)}_\ty:=\bigsqcup_{n=1}^\ty(n!)^{-1}{\mathcal L}^{(m,a)}_n.
\end{equation}
Its geometric zeta function is then given by
\begin{equation}\label{zetaLmaty}
\begin{aligned}
\zeta_{{\mathcal L}^{(m,a)}_\ty}(s)&=\sum_{n=1}^\ty\zeta_{(n!)^{-1}{\mathcal L}^{(m,a)}_n}(s)\\
&=\sum_{n=1}^\ty(n!)^{-s}\zeta_{{\mathcal L}^{(m,a)}_n}(s)=
\sum_{n=1}^{\ty}\frac{(1-m\cdot a^s)^{-n}}{(n!)^{s}}.
\end{aligned}
\end{equation}
Using the Weierstrass $M$-test, it is easy to see that ${\mathcal L}^{(m,a)}_\ty$ can be paramorphically extended to the open right half-plane $\{\re s>0\}$; that is, $\zeta_{{\mathcal L}^{(m,a)}_\ty}\in{\rm Par}(\{\re s>0\})$. Here, the set
$\log_{1/a}m+\frac{2\pi}{\log (1/a)}\I\Ze$ consists of essential singularities of the geometric zeta function $\zeta_{{\mathcal L}^{(m,a)}_\ty}$, and there are no other isolated singularities. (For $m=2$ and $r=1/2$, this construction has been described in \cite[Example 3.3.7]{fzf}; see also \cite{ra1} and \cite{ra2}.) In light of Eq.~\eqref{zetaLmaty}, we see that the total length of the string ${\mathcal L}^{(m,a)}_\ty$ is given by
\begin{equation}\label{lengthLty}
|{\mathcal L}^{(m,a)}_\ty|_1=\zeta_{{\mathcal L}^{(m,a)}_\ty}(1)=\sum_{n=1}^{\ty}\frac{(1-m\cdot a)^{-n}}{n!}=\exp\Big(\frac1{1-m\cdot a}\Big)-1.
\end{equation}
In particular, $\mathcal{L}^{(m,a)}_\ty$ is a bounded fractal string and we always have that $|{\mathcal L}^{(m,a)}_\ty|_1>\E-1>0$.

\begin{remark} 
We do not know whether $\zeta_{{\mathcal L}^{(m,a)}_\ty}$ can be paramorphically extended to an open right half-plane $\{\re s>\beta\}$, for some $\beta<0$. 
\end{remark}

%Q. It is of interest to know the LAURENT expansion of the function $\zeta_{{\mathcal L}^{(m,a)}_\ty}$ near an essential singularity 
%$s_k:=\log_{1/a}m+\frac{2\pi}{\log (1/a)}\I k$, where $k\in \Ze$.

%Q. TYPE OF THE ESSENTIAL SINGULARITY?

\subsection{Power series of bounded fractal strings}

Let $X$ be the set of all bounded fractal strings.  In Subsection \ref{notation}, we have introduced two binary operations, which can be viewed as the operations of addition and multiplication on $X$, defined as the disjoint union $\sqcup$ of fractal strings and the tensor product $\otimes$, respectively. It is easy to check that $({\mathcal L}_1\sqcup{\mathcal L}_2)\otimes\mathcal L_3=({\mathcal L}_1\otimes{\mathcal L}_3)\sqcup({\mathcal L}_2\otimes{\mathcal L}_3)$, for any $\mathcal L_n\in X$, $n=1,2,3$. In this manner, we have obtained a commutative unital semiring $(X,\sqcup,\otimes)$ (without the zero element).\footnote{If zero in $X$ were defined as the one element sequence $(0)$, then $X$ should contain  $(0)\otimes\mathcal{L}=(0,0,,\ldots)$, which is an infinite sequence of zeros. This means that this string has the real number 0 with {\em infinite} multiplicity which we cannot permit. Otherwise, the disjoint union of a nonzero string $\mathcal{L}$ and $0$ in $X$ is not well defined (i.e, it cannot be ordered as a nonincreasing sequence of reals).} The unit element in this semiring is $\mathcal E:=(1)$. This structure is not a ring, since the elements of $X$ do not possess additive inverses with respect to the binary operation~$\sqcup$.

We also have the operation of scalar multiplication of bounded fractal strings $\mathcal L:=(\ell_j)_{j\ge1}$ with positive real numbers $c$, where the resulting fractal string is $c\mathcal L:=(c\ell_j)_{j\ge1}$. The set $X$, viewed with respect to $\sqcup$ as addition and with respect to scalar multiplication, is clearly a positive convex cone, since for any positive real numbers $c$ and $d$ and any two fractal strings $\mathcal L_1,\mathcal L_2\in X$, we have that $c\mathcal L_1\sqcup d\mathcal L_2\in X$.

We are now ready to introduce the notion of a {\em power series of bounded fractal strings} in $X$, as follows. Let $F(z):=\sum_{n=0}^{\ty}c_nz^n$ be the usual power series of complex numbers $z$, where we assume that the coefficients $c_n$ are nonnegative real numbers for all integers $n\ge0$ and $c_n>0$ for at least one $n\ge0$, such that the radius of convergence $R$ of the series $F$ is positive (or infinite). For any fixed fractal string $\mathcal L:=(\ell_j)_{j\ge1}\in X$ such that $|\mathcal L|_1:=\sum_{j\ge1}\ell_j<R$ (i.e., of {\em total length} less than $R$), we can define the corresponding bounded fractal string $F(\mathcal L)$ by
\begin{equation}
	F(\mathcal{L}):=\bigsqcup_{n=0}^{\ty}c_n{\mathcal L}^n,
\end{equation}
where ${\mathcal L}^n$ is the tensor product of $n$ copies of $\mathcal L$ for $n\ge1$, while ${\mathcal L}^0:=\mathcal E$. It is easy to verify that the fractal string $F(\mathcal L)$ is bounded: $|F(\mathcal L)|_1=\sum_{n=1}^{\ty}c_n|\mathcal L|_1^n<\ty$, that is, $F(\mathcal L)\in X$. In this way, we have obtained the mapping
$$
F:\{\mathcal L\in X: |\mathcal L|_1<R\}\to X.
$$
In particular, if $R=+\ty$, we have the mapping $F:X\to X$.

As an example, if we consider the function $F(z):=\exp (z)$, then $c_n=(n!)^{-1}$ for all $n\ge0$ and $R=+\ty$. We see that for any bounded fractal string $\mathcal L\in X$, the {\em exponential fractal string of $\mathcal L$}, that is,
\begin{equation}
	\exp(\mathcal L)=\bigsqcup_{n=0}^{\ty}(n!)^{-1}{\mathcal L}^n,
\end{equation}
is well defined, i.e., it belongs to $X$. Hence,
\begin{equation}
	\zeta_{\exp(\mathcal L)}(s)=\sum_{n=0}^{\ty}(n!)^{-s}\zeta_{\mathcal L}(s)^n, 
\end{equation}
for all $s$ in the open right half-plane $\{\re s>D(\zeta_{\mathcal L})\}$.

In particular, if we take $\mathcal L={\mathcal L}^{(m,a)}$ (the generalized Cantor string defined in Eq.~\eqref{Lmas}) and if ${\mathcal L}^{(m,a)}_{\ty}$ is the generalized Cantor string of infinite order (introduced in Eq.~\eqref{Lty}),  then
\begin{equation}
	\exp({\mathcal L}^{(m,a)})={\mathcal L}_\ty^{(m,a)}\sqcup\{\mathcal E\}
\end{equation}
and
\begin{equation}\label{Lma1}
	\zeta_{\exp({\mathcal L}^{(m,a)})}(s)=\zeta_{{\mathcal L}_\ty^{(m,a)}}(s)+1,
\end{equation}
for all complex numbers $s$ in the open right half-plane $\{\re s>\log_{1/a}m\}$.
In other words, the geometric zeta functions of fractal strings $\exp({\mathcal L}^{(m,a)})$ and ${\mathcal L}_\ty^{(m,a)}$ coincide up to the additive constant~$1$.
Of course, if we take $G(z):=\exp(z)-1=\sum_{n\ge1}\frac{z^n}{n!}$, then we precisely have equality in the counterpart of Eq.\ \eqref{Lma1}; i.e., $\zeta_{G({\mathcal L}^{(m,a)})}(s)=\zeta_{{\mathcal L}_\ty^{(m,a)}}(s)$.

In a similar way, for any fractal string $\mathcal L\in X$ of total length less than $1$, we can define the fractal string $F(\mathcal L)=(1-\mathcal L)^{-1}$, generated by the power series $F(z):=(1-z)^{-1}=\sum_{n\ge0}z^n$, as well as $G(\mathcal L)=-\log(1-\mathcal L)$, generated by the function $G(z):=-\log(1-z)=\sum_{n\ge1}\frac{z^n}n$. For any $\mathcal L\in X$, we can analogously define the bounded fractal strings $\cosh\mathcal L$ (generated by $F(z):=\sum_{n\ge0}\frac{z^{2n}}{(2n)!}$) and $\sinh\mathcal L$ (generated by $F(z):=\sum_{n\ge0}\frac{z^{2n+1}}{(2n+1)!}$), etc.
\medskip

The following result connects the geometric zeta functions of the fractal strings $F(\mathcal L)$ and $\mathcal L$.

\begin{prop}\label{FL}
Let $\mathcal L\in X$, and let $F(z)=\sum_{n=0}^\ty c_nz^n$ be a power series with nonnegative coefficients, where $c_n>0$ for at least one $n\ge0$ and with radius of convergence $R>0$.

Then, for any fractal string $\mathcal L\in X$ of total length less than $R$ $($i.e., $|\mathcal L|_1<R$$)$, we have that
\begin{equation}\label{FL0}
\zeta_{F(\mathcal L)}(s)=\sum_{n=0}^\ty c_n^s\zeta_{{\mathcal L}}(s)^n,
\end{equation}
for all complex numbers $s$ in the open right half-plane $\{\re s>D(\zeta_{\mathcal L})\}$, where $D(\zeta_{\mathcal L})$ is the abscissa of absolute convergence of $\zeta_{\mathcal L}$ $($i.e., the Minkowski dimension of $\mathcal L$ if $\mathcal L$ is an infinite sequence$)$. In particular, if $D(\zeta_{\mathcal L})<1$, then
\begin{equation}
\zeta_{F(\mathcal L)}(1)=F(|\mathcal L|_1).
\end{equation}
%\marginpar{We should be more careful here since  $\zeta_{F(\mathcal L)}(s)$ probably has a natural barrier at the line $\{\re s=0\}$ (maybe not in some special cases when $\zeta_{\mathcal{L}}(0)=0$ but I am not sure if this is possible for a fractal string). I am not yet sure what would be the right formulation for the theorem. Also part (b) should be modified accordingly.}
% 
% $($b$)$ Let $s_0$ be a pole of the geometric zeta function $\zeta_{\mathcal L}$, meromophically extended to a connected open set containing $\{\re s>D(\zeta_{\mathcal L})\}$.
% Also, assume that there is a punctured connected open neighborhood of $s_0$ containing $\{Re s > D(\zeta_{\mathcal L})\}$, on which $\zeta_{F({\mathcal L})}$ is holomorphic.
% If $c_n>0$ for infinitely many $n$'s, then $s_0$ is an {\em essential} singularity of the geometric zeta function of the bounded fractal string $F(\mathcal L)$, that is, of $\zeta_{F({\mathcal L})}$.
\end{prop}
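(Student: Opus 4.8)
The plan is to obtain \eqref{FL0} by applying, termwise to the decomposition $F(\mathcal L)=\bigsqcup_{n\ge0}c_n\mathcal L^n$, the three elementary functorial properties of the geometric zeta function recorded in Subsection~\ref{notation} --- additivity with respect to disjoint unions, the scaling rule $\zeta_{c\mathcal L}(s)=c^s\zeta_{\mathcal L}(s)$, and multiplicativity with respect to tensor products (so that $\zeta_{\mathcal L^n}(s)=\zeta_{\mathcal L}(s)^n$ for $n\ge1$, and $\zeta_{\mathcal L^0}=\zeta_{\mathcal E}\equiv1$) --- and then to justify the ensuing interchange of countable summations. Recall that it has already been checked that $F(\mathcal L)$ is a bounded fractal string, with $|F(\mathcal L)|_1=F(|\mathcal L|_1)<\ty$ (using $|\mathcal L|_1<R$), so that $\zeta_{F(\mathcal L)}$ is a well-defined Dirichlet-type series with $D(\zeta_{F(\mathcal L)})\le1$. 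Grouping the lengths of $F(\mathcal L)$ according to the summand $c_n\mathcal L^n$ they come from then formally yields $\zeta_{F(\mathcal L)}(s)=\sum_{n\ge0}\zeta_{c_n\mathcal L^n}(s)=\sum_{n\ge0}c_n^s\zeta_{\mathcal L}(s)^n$, i.e., \eqref{FL0}; the whole substance of the proof lies in showing that this regrouping is legitimate.

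That rearrangement of sums is the single real step. Note that every length of $F(\mathcal L)$ is of the form $c_n\ell_{j_1}\cdots\ell_{j_n}$ for some $n\ge0$ and some $(j_1,\dots,j_n)$, and conversely each such product is a length, with the appropriate multiplicity. Hence, for a real exponent $\s$, both $\zeta_{F(\mathcal L)}(\s)$ and $\sum_{n\ge0}c_n^\s\zeta_{\mathcal L}(\s)^n$ are two groupings of one and the same countable family of nonnegative terms --- the $\s$-th powers, with multiplicity, of those products --- so by Tonelli's theorem they coincide in $[0,+\ty]$. Consequently $\sum_{n\ge0}c_n^\s\zeta_{\mathcal L}(\s)^n<\ty$ for all $\s>D(\zeta_{F(\mathcal L)})$, and on the open right half-plane $\{\re s>D(\zeta_{F(\mathcal L)})\}$ both the series $\zeta_{F(\mathcal L)}(s)$ and $\sum_{n\ge0}c_n^s\zeta_{\mathcal L}(s)^n$ converge absolutely. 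Fubini's theorem for absolutely convergent series --- expanding each $\zeta_{\mathcal L}(s)^n=(\sum_j\ell_j^s)^n$ into a sum over $n$-tuples and then summing over $n$ --- now validates the regrouping for every complex $s$ in that half-plane, proving \eqref{FL0} there. Finally, to see this half-plane is nonempty and in fact contains $\{\re s\ge1\}$: since $\limsup_n(c_n^\s)^{1/n}=(\limsup_n c_n^{1/n})^\s=R^{-\s}$ for $\s>0$, the power series $\sum_n c_n^\s z^n$ has radius of convergence $R^\s$; and for $\s\ge1$, since each length $\ell_j$ of $\mathcal L$ satisfies $\ell_j\le|\mathcal L|_1$, one has $\ell_j^\s=\ell_j^{\s-1}\ell_j\le|\mathcal L|_1^{\s-1}\ell_j$, whence $\zeta_{\mathcal L}(\s)=\sum_j\ell_j^\s\le|\mathcal L|_1^\s<R^\s$, so indeed $\sum_n c_n^\s\zeta_{\mathcal L}(\s)^n<\ty$.

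For the ``in particular'' assertion, assume $D(\zeta_{\mathcal L})<1$. Then $\zeta_{\mathcal L}(1)=|\mathcal L|_1$ and $1>D(\zeta_{F(\mathcal L)})$ (by the previous step, or directly because $\zeta_{F(\mathcal L)}(1)=|F(\mathcal L)|_1<\ty$), so \eqref{FL0} applies at $s=1$, and since $c_n^{\,1}=c_n$ it gives $\zeta_{F(\mathcal L)}(1)=\sum_{n\ge0}c_n|\mathcal L|_1^n=F(|\mathcal L|_1)$, the last equality holding because $|\mathcal L|_1<R$. (Alternatively, this is immediate from $\zeta_{F(\mathcal L)}(1)=|F(\mathcal L)|_1=F(|\mathcal L|_1)$.)

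The only point requiring genuine care is the Tonelli/Fubini bookkeeping together with the identification of the precise half-plane of validity of \eqref{FL0}, which is $\{\re s>D(\zeta_{F(\mathcal L)})\}$: this always contains $\{\re s\ge1\}$ under the hypothesis, and it coincides with $\{\re s>D(\zeta_{\mathcal L})\}$ whenever $R=+\ty$ (as for $F=\exp$, the instance used in the proof of Theorem~\ref{main}), but for a general finite radius $R$ one should not expect the right-hand side of \eqref{FL0} to converge on all of $\{\re s>D(\zeta_{\mathcal L})\}$, since $\zeta_{\mathcal L}(\s)$ may tend to $+\ty$ as $\s\downarrow D(\zeta_{\mathcal L})$; the elementary estimate $\zeta_{\mathcal L}(\s)\le|\mathcal L|_1^\s<R^\s$ for $\s\ge1$ is what secures a nontrivial region of validity in all cases.
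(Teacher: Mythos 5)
Your argument follows the same algebraic route as the paper's own proof: decompose $F(\mathcal L)=\bigsqcup_{n\ge0}c_n\mathcal L^n$ and apply termwise the additivity of $\zeta_{\mathcal L}$ over disjoint unions, the scaling rule $\zeta_{c\mathcal L}(s)=c^s\zeta_{\mathcal L}(s)$, and the multiplicativity $\zeta_{\mathcal L^n}(s)=\zeta_{\mathcal L}(s)^n$. Where you go further is in carefully justifying the regrouping of the resulting double series via Tonelli (for real $\sigma$, nonnegative terms) and then Fubini (for complex $s$, absolute convergence); the paper's proof is a purely formal two-line manipulation that does not address this at all.

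More importantly, you have caught a genuine flaw in the paper's \emph{statement}: the claim that \eqref{FL0} holds on the entire half-plane $\{\re s>D(\zeta_{\mathcal L})\}$ is too strong when $R<\infty$. As you observe, for real $\sigma>0$ the series $\sum_n c_n^\sigma z^n$ has radius of convergence $R^\sigma$, while $\zeta_{\mathcal L}(\sigma)$ typically blows up as $\sigma\downarrow D(\zeta_{\mathcal L})$; thus $\zeta_{\mathcal L}(\sigma)$ can exceed $R^\sigma$ and the right-hand side of \eqref{FL0} diverges. (A concrete example: $F(z)=(1-z)^{-1}$, $R=1$, $\mathcal L=(2^{-j-1})_{j\ge1}$, with $|\mathcal L|_1=\tfrac12<1$ and $D(\zeta_{\mathcal L})=0$, gives $D(\zeta_{F(\mathcal L)})=\log_2\bigl((1+\sqrt5)/2\bigr)\approx0.694>0$, and both sides of \eqref{FL0} diverge for $0<\sigma<D(\zeta_{F(\mathcal L)})$.) Your identification of the correct region of validity as $\{\re s>D(\zeta_{F(\mathcal L)})\}$, your proof that this always contains $\{\re s\ge1\}$ (via $\ell_j^\sigma\le|\mathcal L|_1^{\sigma-1}\ell_j$, so $\zeta_{\mathcal L}(\sigma)\le|\mathcal L|_1^\sigma<R^\sigma$ for $\sigma\ge1$), and your remark that $D(\zeta_{F(\mathcal L)})=D(\zeta_{\mathcal L})$ when $R=+\infty$ (the case $F=\exp$ that is actually invoked in Theorem~\ref{main}) are all correct and constitute a useful correction to the proposition as printed. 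The ``in particular'' claim at $s=1$ is unaffected, as you note.
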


\begin{proof}
For any $s$ in $\{\re s>D(\zeta_{\mathcal L})\}$, we have that

\begin{equation}\label{FLc}
\zeta_{F(\mathcal L)}(s)=\sum_{n=0}^\ty \zeta_{c_n{\mathcal L}^n}(s)=\sum_{n=0}^\ty c_n^s\zeta_{{\mathcal L}^n}(s)
=\sum_{n=0}^\ty c_n^s\zeta_{{\mathcal L}}(s)^n,
\end{equation}
where we have used the fact that for any three fractal strings $\mathcal L_1$, $\mathcal L_2$ and $\mathcal L$ in $X$ and for any positive real number $c$, we have that $\zeta_{\mathcal L_1\sqcup\mathcal L_2}(s)=\zeta_{\mathcal L_1}(s)+\zeta_{\mathcal L_2}(s)$, $\zeta_{c\mathcal L}(s)=c^s\zeta_{\mathcal L}(s)$ and $\zeta_{\mathcal L_1\otimes\mathcal L_2}(s)=\zeta_{\mathcal L_1}(s)\cdot\zeta_{\mathcal L_2}(s)$ (in particular, by mathematical induction, we have that $\zeta_{\mathcal L^n}(s)=\zeta_{\mathcal L}(s)^n$, for any $n\ge 2$).

If $D(\zeta_{\mathcal L})<1$, then \eqref{FLc} implies that
\begin{equation}
\zeta_{F(\mathcal L)}(1)=\sum_{n=0}^\ty c_n\zeta_{{\mathcal L}}(1)^n=F(\zeta_{\mathcal L}(1))=F(|\mathcal L|_1).
\end{equation}
% 
% 
% ({\em b}) If $s_0$ is a pole of $\zeta_{\mathcal L}$ of order $k>0$, then for any $n$ for which $c_n>0$, we have that $s_0$ is a pole of  $\zeta_{\mathcal L}^n$ of order $kn$, and the set of orders $\{kn:c_n>0\}$ of the common pole $s_0$ corresponding to the set of functions $\{\zeta_{\mathcal L}:c_n>0\}$ is infinite. Therefore, $s_0$ is an essential singularity of the function defined by the right-hand side of Eq.\ \eqref{FL0}, i.e., of $\zeta_{F({\mathcal L})}$.
\end{proof}

% \begin{remark}
% From Theorem \ref{FL}($b$) it follows that, if a fractal string $\mathcal L$ is such that $s_0$ is a pole of its geometric zeta function $\zeta_{\mathcal L}$, then (under some natural conditions on $\mathcal L$ and $F$) $s_0$ is an essential singularity of the bounded fractal string $F(\mathcal L)$ if and only if $c_n>0$ for infinitely many $n$'s.
% \end{remark}

\section{Geometric zeta functions with prescribed abscissa of paramorphic continuation}\label{mains}

This section is divided into two subsections: in Subsection \ref{mainss}, we establish one of the key results of this paper (Theorem \ref{main}), which establishes the existence of suitable paramorphic (and complex-valued) fractal zeta functions with prescribed abscissae of paramorphic, meromorphic and absolute convergence, respectively. Moreover, in Subsection \ref{harm}, based in part on this result, we construct suitable (real-valued) harmonic functions that are associated with paramorphic geometric zeta functions and have interesting sets of essential singularities.

\subsection{Construction of a class of paramorphic fractal zeta functions via a sequence of generalized Cantor strings}\label{mainss}
%In order to prove the main result stated in Theorem~\ref{main},
%Here is the main result of this paper, in the proof of which 
In this subsection, using the results of Section \ref{sec:3}, we construct a fractal string $\mathcal L$ such that the corresponding geometric zeta function $\zeta_{\mathcal L}$ has prescribed values of abscissae of paramorphic continuation $D_{\rm par}(\zeta_{\mathcal L})$ (see Definition \ref{D_par}), of meromorphic continuation $D(\zeta_{\rm mer})$ and of absolute convergence $D(\zeta_{\mathcal L})$. The construction of $\mathcal L$ is based on a careful choice of a suitable sequence of generalized Cantor strings. The corresponding precise result is stated in Theorem~\ref{main}, to which we refer the reader and which we now establish.
\medskip

\begin{proof}[Proof of Theorem~\ref{main}]
{\em Case $(i)$}: We first consider the case when $D_\ty< D_1=D$. 
As we have seen, each fractal string ${\mathcal L}^{(m,a)}_\ty$ is bounded for any integer $m\ge2$ and for any real number $a\in(0,1/m)$; see Eq.~\eqref{lengthLty} above. Let $(D_k)_{k\ge2}$ be any decreasing sequence of real numbers converging to $D_{\ty}$ as $k\to\ty$ and such that $D_2<D_1$.

Let $(m_k)_{k\ge1}$ be a strictly increasing sequence of integers diverging to $+\ty$ as $k\to\ty$, such that $m_1\ge 2$. Next, we define positive real numbers $a_k$ by the following equality: $D_k=\log_{1/a_k}m_k$; that is, $a_k:=m_k^{-1/D_k}$, for all $k\ge1$. We have that $m_ka_k=m_k^{1-1/D_k}<1$; i.e., $a_k\in(0,1/m_k)$, for all~$k\ge1$. 

Now, we introduce the following sequence of bounded fractal strings:
\begin{equation}\label{Lkstring}
{\mathcal L}_k:=\frac{2^{-k}}{L_k}{\mathcal L}^{(m_k,a_k)}_{\ty},\q\mbox{for all $k\ge1$.}
\end{equation}
Here, 
${\mathcal L}^{(m_k,a_k)}_{\ty}$ is the generalized Cantor fractal string of infinite order defined by Eq.~\eqref{Lty}, while $L_k$ is its total length, given by~\eqref{lengthLty}. We have that
\begin{equation}\label{Lk}
L_k:=|\mathcal L^{(m_k,a_k)}_{\ty}|_1=\exp\Big(\frac1{1-m_ka_k}\Big)-1.
\end{equation}
%see Eq.~\eqref{lengthLty}. 
Since 
$$
\lim_{k\to\ty}m_ka_k=\lim_{k\to\ty} m_k^{1-1/D_k}=(+\ty)^{1-1/D}=0,
$$
we conclude from Eq.~\eqref{Lk} that 
\begin{equation}\label{Le}
\lim_{k\to\ty} L_k=\E-1.
\end{equation}
Let us verify that the fractal string $\mathcal L$, given as the disjoint union of the sequence of bounded fractal strings $({\mathcal L}_k)_{k\ge1}$,
\begin{equation}\label{L}
{\mathcal L}:=\bigsqcup_{k=1}^{\ty}{\mathcal L}_k,
\end{equation}
is well defined and bounded. 
%First of all, it follows from Eq.~\eqref{Lkstring} that $\max {\mathcal L}_k=\frac{2^{-k}}{L_k}\to 0$ as $k\to\ty$, since $L_k\to\E-1$ as $k\to\ty$, so that $\mathcal L$ is well defined.
%Next, it is bounded since
Indeed, we have that
\begin{equation}
|{\mathcal L}|_1=\sum_{k=1}^{\ty}|{\mathcal L}_k|_1=\sum_{k=1}^{\ty}\frac{2^{-k}}{L_k}|{\mathcal L}^{(m_k,a_k)}_{\ty}|_1=
\sum_{k=1}^{\ty}2^{-k}=1,
\end{equation}
where in the next to last equality, we have made use of Eq.~\eqref{Lk}.

\setlength{\unitlength}{1.2cm}
 \linethickness{1pt}
\thinlines

\begin{figure}
\begin{center}
\begin{picture}(6.5,3.5)
\thinlines
\put(0,0){\vector(1,0){6.5}}

\put(.5,-.5){\vector(0,1){3.5}}

\put(.5,0){\circle*{.1}}
\put(6,0){\circle*{.1}}
\put(.27,-.4){\small $0$}
\put(5.95,-.4){\small $1$}

\multiput(4,0)(0,0.4){8}{\circle{.05}}
\put(3.8,-.4){\small $D_1$}

\put(5,0){\circle*{.1}}
\put(4.8,-.4){\small $D$}

\put(5.4,2.7){\small $\Ce$}

\multiput(3.2,0)(0,0.2){15}{\circle{.05}}
\put(3,-.4){\small $D_2$}

\put(2.7,1.4){\small$S_\ty$}
%\put(2.1,1.4){$\dots$}

\multiput(2.6,0)(0,0.127){23}{\circle{.05}}

\put(2.4,-.4){\small $D_3$}

\put(2,0){\line(0,1){2.85}}
\put(1.8,-.4){\small $D_\ty$}

\end{picture}
\end{center}
\vskip.5cm
\caption{The set $S_\ty$ of essential singularities (denoted by small circles) of the geometric zeta function $\zeta_{\mathcal L}$, corresponding to the fractal string $\mathcal L$ constructed in the proof of Theorem \ref{main} (see Eqs.~\eqref{L} and \eqref{Lkstring}), accumulates near the vertical line $\{\re s=D_\ty\}$. Here, $D_{\rm par}(\zeta_{\mathcal L})=D_\ty$, $D_{\rm mer}(\zeta_{\mathcal L})=D_1$ and $D(\zeta_{\mathcal L})=\dim{\mathcal L}=D$.}\label{figure}
\end{figure}
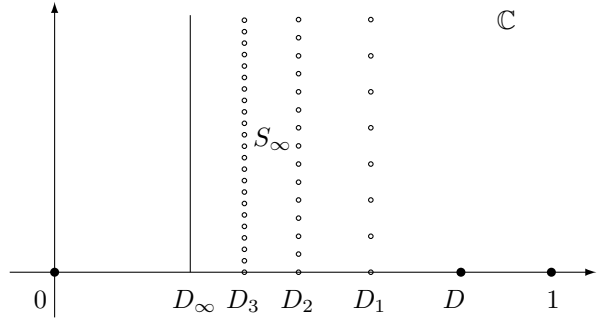

From the definition of the fractal string $\mathcal L$ in \eqref{L} (see also \eqref{Lkstring}), it follows that
\begin{equation}\label{zetaL}
\zeta_{\mathcal L}(s)=\sum_{k=1}^{\ty}\zeta_{{\mathcal L}_k}(s)=\sum_{k=1}^\ty\frac{2^{-ks}}{L_k^s}\zeta_{{\mathcal L}^{(m_k,a_k)}_\ty}(s),
\end{equation}
for all $s\in\Ce$ with $\re s>D_\ty$, except for the set of singularities. All the singularities of $\zeta_{\mathcal L}$, contained in the right half-plane $\{\re s>D_\ty\}$, are essential, and the corresponding set $S_\ty$ of its essential singularities, contained in this same half-plane, coincides with the union over all $k\in\eN$ of the sets of essential singularities of $\zeta_{{\mathcal L}^{(m_k,a_k)}_\ty}$:
\begin{equation}\label{Sty}
S_\ty=\bigcup_{k=1}^\ty\Big(D_k+\frac{2\pi}{\log (1/a_k)}\I\Ze\Big).
\end{equation}
This set $S_{\ty}$ consists of isolated singularities, which means that the geometric zeta function $\zeta_{\mathcal L}$ is paramorphic in the open right half-plane $\{\re s>D_\ty\}$; see Definition~\ref{para}. [That the zeta function $\zeta_{\mathcal L}(s)$ given by \eqref{zetaL} is indeed well defined in the connected open set $\{\re s>D_\ty\}\setminus S_\ty$, is shown in the Appendix (Section~\ref{app}); see Theorem \ref{paraL}.] 

On the other hand, for each arithmetic set $D_k+\frac{2\pi}{\log (1/a_k)}\I\Ze$, the value of 
$\mathbf p_k:=\frac{2\pi}{\log (1/a_k)}$ (which is called the {\em oscillatory period} of the fractal string $\mathcal L_k$; see, e.g., \cite[p.\ 188]{fzf} and \cite{lapidusfrank12}) tends to $0$ as $k\to\ty$, because 
$$
\lim_{k\to\ty}a_k=\lim_{k\to\ty} m_k^{-1/D_k}=(+\ty)^{-1/D}=0.
$$ 
In particular, since $D_k\to D_{\ty}$ as $k\to\ty$, it follows that the set of accumulation points of the set $S_\ty$ coincides with the vertical line $\{\re s=D_{\ty}\}$; see Figure~\ref{figure}. Indeed, assume that $z$ is an arbitrary complex number contained in the vertical line $\{\re s=D_{\ty}\}$. For any connected open neighborhood $N(z)$ of $z$, there are infinitely many essential singularities of $\zeta_{\mathcal L}$ contained in $N(z)\cap\{\re s>D_\ty\}$. This shows that $D_{\ty}$ is equal to the abscissa of paramorphic continuation $D_{\rm par}(\zeta_{\mathcal L})$ of the geometric zeta function $\zeta_{\mathcal L}$.

Since the function $\zeta_{\mathcal L}$ is holomorphic in the open right half-plane $\{\re s>D_1\}$, while $D_1$ is an essential singularity, it follows that the abscissa of meromorphic continuation $D_{\rm mer}(\zeta_{\mathcal L})$ of $\zeta_{\mathcal L}$ is equal to $D_1$. This concludes the proof of the theorem in case $(i)$.
\medskip

{\em Case $(ii)$}: Let $D_\ty<D_1<D$. Let $\mathcal L^{(m',a')}$, where $m'$ is an integer $\ge2$ and $a'\in(0,1/m')$, be a generalized Cantor string such that the abscissa $D(\zeta_{\mathcal L^{(m',a')}})$ of (absolute) convergence of its geometric zeta function $\zeta_{\mathcal L^{(m',a')}}$ is equal to $D$. 
Then, the bounded fractal string $\mathcal L\sqcup \mathcal L^{(m',a')}$, where $\mathcal L$ is the fractal string from step $(i)$, satisfies the desired properties. 
Indeed, we have that $\zeta_{\mathcal L\sqcup \mathcal L^{(m',a')}}(s)=\zeta_{\mathcal L}(s)+\zeta_{\mathcal L^{(m',a')}}(s)$, for all $s\in\Ce$ with $\re s$ sufficiently large. Therefore, $\zeta_{\mathcal L\sqcup \mathcal L^{(m',a')}}$ can be paramorphically continued to the open right half-plane $\{\re s>D_\ty\}$.
\medskip

%{\em Case $(iii)$}

This completes the proof of the theorem.
\end{proof}

\medskip
The following questions arise naturally in this context:

\medskip

%\begin{remark}
%In addition to the statement of Theorem~\ref{main}, as is clear from its proof provided above, the whole line $\{\re s=D_\ty\}$ is a {\em barrier} of $\zeta_{\mathcal L}$ with respect to paramorphic extensions of $\zeta_{\mathcal L}$. This {\em paramorphic barrier} of $\zeta_{\mathcal L}$ coincides with the set of accumulation points of the set of essential singularities in the open right half-plane $\{\re s=D_\ty\}$.
%\end{remark}

Q1: What does the asymptotics of the tube function of a fractal string look like when $t\to0^+$, in the case when the associated geometric zeta function is paramorphic? For example, in the case of the fractal strings ${\mathcal L}^{(m,a)}_n$ and ${\mathcal L}^{(m,a)}_\ty$ constructed above, as well as for $\mathcal L_\ty$ appearing in Theorem \ref{paraL} of the appendix below.
\medskip

Q2: In the paramorphic case and under suitable polynomial-type growth hypotheses on $\zeta_{\mathcal L}$, is it possible to establish some kind of a tube formula for a fractal string $\mathcal L$ if we know the complex dimensions of $\mathcal L$? 
\medskip

In light of the results of Section~\ref{harm} below, we could ask analogous questions about fractal tube formulas for bounded subsets of $\eR^N$ (for $N\ge2$) and their distance zeta functions instead of for fractal strings and their geometric zeta functions (corresponding to the case when $N=1$, as in \cite[Ch.~8]{lapidusfrank12}). For fractal tube formulas for bounded sets (and, more generally, for relative fractal drums) in $\eR^N$, see \cite[Ch.~5]{fzf} and~\cite{ftf}.
We note that the results about the general fractal tube formulas obtained in \cite{lapidusfrank12} and \cite{fzf,ftf} assume the meromorphicity of a suitable fractal zeta function in a suitable domain of $\Ce$, along with appropriate growth conditions satisfied by this zeta function. Finally, we mention that several results along the lines suggested in question Q2 are provided in~\cite{log}.
\bigskip\hbox{}

\subsection{Harmonic functions and their essential singularities}\label{harm}

We first introduce the notion of an {\em isolated singularity} of a given harmonic function defined on a connected open subset of the Euclidean two-dimensional plane.

\begin{defn}\label{harmonics} Let $U$ be a nonempty connected open subset of the $2$-dimen\-sional plane $\eR^2$. Let $S$ be a set of isolated points of $U$ such that a function $u:U\setminus S\to\eR$ is harmonic in $U\setminus S$. (Observe that the set $U\setminus S$ is necessarily connected as well.) Let $v:U\setminus S\to\eR$ be {\em a} conjugate harmonic function of the given real-valued function $u$ on the connected set $U\setminus S$, meaning that the function $f:U\setminus S\to\Ce$ (here, we identify $U\setminus S$ with the corresponding subset of $\Ce$) defined by
$f(s):=u(x,y)+\I v(x,y)$, where $s:=x+\I y$, is holomorphic in $U\setminus S$.  We then say that a point $(x_0,y_0)\in S$ is an {\em isolated singularity} of $u$ if the corresponding complex number $s_0:=x_0+\I y_0$ is an {\em isolated singularity} of $f$. In particular, we say that a point $(x_0,y_0)\in S$ is an {\em essential singularity} (respectively, {\em pole}) of $u$ if the corresponding complex number $s_0:=x_0+\I y_0$ is an essential singularity (respectively, pole) of $f$.

For a harmonic function $u$ appearing in this definition, we say (in short) that $u$ is {\em paraharmonic} in $U$ if each of its points in $S$ is an isolated singularity of the corresponding holomorphic function $f:U\setminus S\to\Ce$. Or even more succinctly, a harmonic function $u$ said to be {\em paraharmonic} in $U$ if the corresponding complex-valued function $f$ is paramorphic in the set $U$, viewed as a connected open subset of the complex plane.
\end{defn}

It is easy to generate paraharmonic functions from paramorphic functions as shown in the following example.

\begin{example}
	The function $u(x,y)=\re(\E^{1/(x+\I y-x_0-\I y_0)})$ is paramorphic in $\mathbb{R}^2$. Here, $(x_0,y_0)\in\eR^2$ is the only singularity of $u$, and it is essential.
	Of course, the above function is just the real part of the corresponding paramorphic function $f(z)=\E^{1/(z-z_0)}$ discussed in Example \ref{paramex}.
\end{example}

Note that the notion of an isolated singularity (and in particular, of a pole, as well as of an essential singularity) of a harmonic function $u$, introduced in Definition~\ref{harmonics} above, is meaningful since the conjugate harmonic function $v$, defined on a connected open set, is uniquely determined by $u$, up to an additive constant. In light of this observation, adding a constant to the function $f$ does not change the type of any of its isolated singularities.

The following corollary of Theorem \ref{main} shows that there exist explicit real-valued functions $u$ that are paraharmonic in a prescribed open right half-plane $U$ in $\eR^2$, and possessing infinitely many essential singularities, accumulating densely along the boundary~$\pa U$ (which, in this case, is a vertical line).

\begin{cor}\label{harmonic}
Let $U$ be an open right half-plane in $\eR^2$ defined by $U:=\{(x,y)\in\eR^2:x>D_\ty\}$, where $D_\ty\in[0,1)$ is given. Let $D_1$ and $D$ be such that $D_\ty<D_1\le D<1$. Then, there exists an explicitly constructible real-valued function $u$ which is paraharmonic in $U$ $($in the sense of Definition \ref{harmonics}$)$, and is generated by the geometric zeta function $\zeta_{\mathcal L}$ of a bounded fractal string $\mathcal L$ $($i.e., $f=u+\I v=\zeta_{\mathcal{L}}$ in Definition \ref{harmonics}$)$.
Furthermore, the set $S$ of its essential singularities is infinite, contained in the semi-open vertical strip $\{(x,y)\in\eR^2:D_\ty<x\le D_1\}$, and such that the set of accumulation points of $S$ coincides with the vertical line $\{(x,y)\in\eR^2:x=D_\ty\}$, while $u$ is harmonic in the open right half-plane $\{(x,y)\in\eR^2:x>D\}$.

Moreover, the open right half-plane $\{(x,y)\in\eR^2:x>D_\ty\}$ is the maximal right half-plane to which the function $u$ can be paraharmonically extended.
\end{cor}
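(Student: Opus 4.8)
The plan is to obtain the desired $u$ as the real part of the geometric zeta function produced by Theorem~\ref{main}, and then to translate each conclusion of that theorem into a statement about $u$ via Definition~\ref{harmonics}. First I would apply Theorem~\ref{main} to the given triple $(D_\ty,D_1,D)$ --- which satisfies the hypotheses of that theorem, since it lies in $[0,1]$ and $D_\ty<D_1\le D$ --- to obtain an explicitly constructible bounded fractal string $\mathcal L$ whose geometric zeta function $\zeta_{\mathcal L}$ is paramorphic in $U:=\{\re s>D_\ty\}$ and satisfies $D_{\rm par}(\zeta_{\mathcal L})=D_\ty$, $D_{\rm mer}(\zeta_{\mathcal L})=D_1$, $D(\zeta_{\mathcal L})=\dim\mathcal L=D$, with $\{\re s=D_\ty\}$ as its paramorphic barrier and this barrier coinciding with the accumulation set of its essential singularities. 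I then set $f:=\zeta_{\mathcal L}$, $u:=\re f$ and $v:=\im f$, identifying $U\subseteq\Ce$ with the open half-plane $\{(x,y)\in\eR^2:x>D_\ty\}$. Since by construction $f=u+\I v=\zeta_{\mathcal L}$, this $u$ (together with this $v$) is the one asserted in the statement.

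Next I would verify that $u$ is paraharmonic in $U$ and read off the geometric claims. Let $S\subseteq U$ be the singularity set of $\zeta_{\mathcal L}$, which by the construction underlying Theorem~\ref{main} consists entirely of essential singularities within $U$. Since $f$ is paramorphic in $U$, the set $U\setminus S$ is open and (by the parenthetical remark in Definition~\ref{harmonics}) connected, and $f|_{U\setminus S}$ is holomorphic; hence $u$ is harmonic in $U\setminus S$ and $v$ is a conjugate harmonic function of $u$ there. By Definition~\ref{harmonics}, each point of $S$ is then an isolated singularity of $u$, and it is essential precisely because it is an essential singularity of $f=\zeta_{\mathcal L}$; thus $u$ is paraharmonic in $U$ and its set of essential singularities is exactly $S$. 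The remaining assertions follow from the properties of $\zeta_{\mathcal L}$: $S$ is infinite; since $D_{\rm mer}(\zeta_{\mathcal L})=D_1$, the function $\zeta_{\mathcal L}$ is meromorphic --- hence free of essential singularities --- in $\{\re s>D_1\}$, so combining this with paramorphicity in $U$ gives $S\subseteq\{D_\ty<\re s\le D_1\}$, i.e., $S$ lies in the semi-open strip $\{(x,y)\in\eR^2:D_\ty<x\le D_1\}$; the accumulation set of $S$ equals the paramorphic barrier $\{\re s=D_\ty\}$, i.e., the vertical line $\{(x,y)\in\eR^2:x=D_\ty\}$; and since $D(\zeta_{\mathcal L})=D$, the series defining $\zeta_{\mathcal L}$ converges absolutely, so $\zeta_{\mathcal L}$ is holomorphic in $\{\re s>D\}$ and $u$ is harmonic in $\{(x,y)\in\eR^2:x>D\}$.

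Finally I would establish maximality. Suppose, for contradiction, that $u$ admitted a paraharmonic extension $\tilde u$ to a right half-plane $\{(x,y)\in\eR^2:x>D_\ty'\}$ with $D_\ty'<D_\ty$. By Definition~\ref{harmonics} the singularity set of $\tilde u$ is a discrete, relatively closed subset of $\{x>D_\ty'\}$, and therefore has no accumulation point in $\{x>D_\ty'\}$; on the other hand, $\tilde u$ agrees with $u$ on $U$ away from singularities, so (adding a constant does not change the type of an isolated singularity) $\tilde u$ has an essential singularity at every point of $S$, and these accumulate at every point of the line $\{x=D_\ty\}\subseteq\{x>D_\ty'\}$ --- a contradiction. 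Hence $\{(x,y)\in\eR^2:x>D_\ty\}$ is the largest right half-plane to which $u$ extends paraharmonically. I expect this last step to be the only genuinely delicate point: the tempting shortcut of deducing from an extension of $u$ an extension of its harmonic conjugate, and hence of $f=\zeta_{\mathcal L}$, contradicting $D_{\rm par}(\zeta_{\mathcal L})=D_\ty$, is not quite valid as stated, because a harmonic function on a non-simply-connected domain need not possess a single-valued harmonic conjugate (a period obstruction); the direct argument above, via the density of the essential singularities along the barrier line together with the isolated-singularity requirement built into the notion of a paraharmonic function, avoids this issue and is the cleanest route.
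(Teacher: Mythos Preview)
Your proposal is correct and takes essentially the same approach as the paper: set $u:=\re\zeta_{\mathcal L}$ (with $v:=\im\zeta_{\mathcal L}$) for the string $\mathcal L$ furnished by Theorem~\ref{main}, and read off the properties of $u$ from those of $\zeta_{\mathcal L}$. The paper's own proof is in fact a single sentence to this effect; your version supplies considerably more detail, including a careful direct argument for the maximality claim that the paper leaves implicit.
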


\begin{proof}
The claim follows immediately from Theorem \ref{main}, by letting $u:=\re \zeta_{\mathcal L}$. The corresponding conjugate harmonic function is $v:=\im \zeta_{\mathcal L}$.
\end{proof}

\begin{remark}
It is possible to construct a class of paraharmonic functions by using paramorphic functions of a simpler type, for example $g(s):=\exp(1/s)$. Here, $g$ is paramorphic on $\Ce$ and $s=0$ is the only isolated singularity of $f$. Furthermore, it is an essential singularity of $f$. If $S=\{a_n:n\in\eN\}$ is any set of isolated points contained in a given connected open set $U\stq\Ce$, then, by using the Weierstrass $M$-test, it is easy to verify that the function
$$
f(s):=\sum_{j=1}^\ty\frac1{n!}g(s-a_n)=\sum_{j=1}^\ty\frac1{n!}\exp(1/(s-a_n))
$$
is paramorphic on $U$. More precisely, $f$ is holomorphic in $U\setminus S$; see Definition~\ref{para}. We can ensure that the set of accumulation points of the set $S$ of isolated singularities of $f$ coincide with the boundary of $U$. However, we do not know if there is a (bounded) fractal string $\mathcal L$ such that $\zeta_{\mathcal L}(s)=g(s-a)$, for all $s\in\Ce$ with $\re s$ sufficiently large, where $a\in(0,1)$ is fixed.
\end{remark}

\section{Essential singularities of distance zeta functions}
\label{sec:5}

Let $A$ be a nonempty bounded set in $\eR^N$, where $N$ is a positive integer, and let $d(x,A):=\inf\{|x-a|:a\in A\}$ denote the Euclidean distance from $x\in\eR^N$ to $A$. Assume that $\d$ is an arbitrary positive real number, and let $A_\d:=\{x\in\eR^N:d(x,A)<\d\}$ be the open $\d$-neighborhood of $A$ in $\eR^N$. The {\em distance zeta function} $\zeta_A$ of the set $A$ is defined by
\begin{equation}\label{zeta_A}
\zeta_A(s):=\int_{A_\d}d(x,A)^{s-N}\dd x,
\end{equation} 
for all $s\in\Ce$ such that $\re s$ is sufficiently large; see \cite{larazu1} or \cite{fzf}.
It is easy to verify that the difference of distance zeta functions corresponding to different values of the parameter $\d>0$ is always an entire function. Hence, the value of the parameter $\d$ is unimportant, since it does not have any influence on the type of any of the isolated singularities of the distance zeta function, considered on any given connected and open subset $U$ of the complex plane. 

We denote by $D(\zeta_A)$ the abscissa of convergence of the Dirichlet-type integral defining $\zeta_A$ on the right-hand side of \eqref{zeta_A}; by definition, this means that $\{\re s>D(\zeta_A)\}$ is the largest right-half plane for which the Lebesgue integral defining $\zeta_A$ in \eqref{zeta_A} is convergent.
Then, according to \cite[Theorems 2.1.11 and 2.1.20]{fzf} we have that
\begin{equation}
	\label{eq:26.1/2}
	D(\zeta_A)=D_A,
\end{equation}
the (upper) Minkowski (or box) dimension of $A$.\footnote{It follows from our hypotheses and the definition of $D_A$ that $0\leq D_A\leq N$.}
Moreover, by analytic continuation, Eq.\ \eqref{zeta_A} continues to hold for all $s\in\Ce$ with $\re s > D(\zeta_A)$; see {\em loc.\ cit}.

We refer the reader to interesting examples of obtained distance zeta functions of various well-known fractal sets, such as the Sierpi\'nski gasket and carpet, which can be found in \cite[Section 3.2]{fzf} as well as in the paper \cite{larazu1}.

\begin{theorem}\label{main2}
Let $N\ge1$ be a fixed but arbitrary integer.
Let $D_{\ty}$, $D_1$ and $D$ be real numbers belonging to the interval $[0,N)$ and such that $D_\ty< D_1\le D$. Then, there exists an explicitly constructible nonempty bounded set $A$ in $\eR^N$ such that the corresponding distance zeta function $\zeta_A$ can be paramorphically extended to the open right half-plane $\{\re s>D_{\ty}\}$ and
\begin{equation}
D_{\rm par}(\zeta_A)=D_\ty,\q
D_{\rm mer}(\zeta_A)=D_1,\q
D(\zeta_A)=D.
\end{equation}
\end{theorem}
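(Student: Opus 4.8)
\textbf{Proof proposal for Theorem~\ref{main2}.}
The plan is to reduce the $N$-dimensional statement to the one-dimensional case already treated in Theorem~\ref{main}, by ``fattening'' the fractal string $\mathcal L$ constructed there into a bounded subset of $\eR^N$ whose distance zeta function agrees, up to an entire function, with $\zeta_{\mathcal L}$ shifted by $N-1$. First I would invoke Theorem~\ref{main} with the exponents $D_\ty-(N-1)$, $D_1-(N-1)$, $D-(N-1)$ — after first reducing to the subinterval where these lie in $[0,1]$ — to obtain a bounded fractal string, and then geometrically realize it (in the sense of Remark~\ref{rem:1.1}) as the boundary $\partial\Omega$ of a bounded open set $\Omega\subset\eR$. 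The key classical fact I would quote is that, for a bounded fractal string $\mathcal L$ with canonical geometric realization and associated compact boundary $A:=\partial\Omega\subset\eR$, one has $\zeta_A(s)=2^{1-s}\,\zeta_{\mathcal L}(s)/s$ plus an entire function (this is \cite[Thm.~2.1.60, Eq.~(2.1.91) or similar]{fzf}, relating the distance zeta function of the boundary of a fractal string to its geometric zeta function), or more precisely that the distance and geometric zeta functions differ by an explicit elementary meromorphic prefactor and an entire term. Since multiplication by $2^{1-s}/s$ (an entire nonvanishing factor away from $s=0$, with a single simple pole at $s=0$ which is harmless because $0\le D_\ty$) and addition of an entire function preserve abscissae of absolute, meromorphic and paramorphic continuation — the first because it neither creates nor removes isolated singularities in $\{\re s>0\}$ and does not change the region of holomorphy there, the second by the definitions of $D(\cdot)$, $D_{\rm mer}(\cdot)$ and $D_{\rm par}(\cdot)$ together with the unique paramorphic continuation principle (Theorem~\ref{cp}) — we get $D_{\rm par}(\zeta_A)=D_\ty$, $D_{\rm mer}(\zeta_A)=D_1$ and $D(\zeta_A)=D$ in the case $N=1$, which is Theorem~\ref{main} itself.

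For general $N\ge 2$ I would take a product construction: let $A_1=\partial\Omega\subset\eR$ be the $1$-dimensional set just produced (with abscissae shifted down by $N-1$), and set $A:=A_1\times\{0\}^{N-1}\subset\eR^N$, or rather $A:=A_1\times K$ for a fixed smooth reference set $K\subset\eR^{N-1}$ of dimension $N-1$ (e.g.\ a cube of that dimension, or just a point if one is willing to carry the lower-dimensional bookkeeping directly). One then computes $\zeta_A$ by a Fubini-type argument on the tube $A_\delta$: the distance $d(x,A)$ for $x=(x',x'')$ near $A$ decouples approximately as $d(x',A_1)$ in the ``fractal'' direction and the smooth contribution in the remaining $N-1$ directions, so that $\int_{A_\delta} d(x,A)^{s-N}\,dx$ equals, up to an entire remainder coming from the corner regions and from the smooth part, a constant multiple of $\int_{(A_1)_\delta} d(x',A_1)^{s-(N-1)-1}\,dx' \cdot (\text{smooth }(N-1)\text{-dimensional integral})$, i.e.\ a nonzero elementary prefactor times $\zeta_{A_1}^{(1)}(s-(N-1))$ where $\zeta_{A_1}^{(1)}$ denotes the $1$-dimensional distance zeta function. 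The precise statement of this product/shift formula is exactly \cite[Thm.~4.1.31 (or the analogous product theorem for distance zeta functions)]{fzf}; I would cite it rather than re-derive it. Consequently $\zeta_A$ and $\zeta_{A_1}^{(1)}(\cdot-(N-1))$ have the same abscissae, and since the latter has abscissae $D_\ty$, $D_1$, $D$ by the $N=1$ case applied to the shifted data, the theorem follows.

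The main obstacle is making the ``up to an entire function / up to a nonvanishing elementary prefactor'' claims fully rigorous in two places: first, the identity relating $\zeta_{\partial\Omega}$ to $\zeta_{\mathcal L}$ for the canonical geometric realization must be the correct one from \cite{fzf} (one must be careful that the prefactor is holomorphic and nonvanishing on the relevant half-plane $\{\re s>D_\ty\}$ — its only pole at $s=0$ and its zeros, if any, must lie in $\{\re s\le 0\}\subseteq\{\re s\le D_\ty\}$, which holds since $D_\ty\ge 0$ — so that it genuinely does not alter the set of essential singularities nor the abscissa of paramorphic continuation in that half-plane); and second, the product formula in $\eR^N$ requires controlling the ``near-corner'' region of $A_\delta$ where the decoupling of the distance function fails, and verifying that its contribution is entire in $s$, which is the standard but slightly delicate part of such product arguments in \cite[Ch.~4]{fzf}. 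A secondary but purely bookkeeping point is the initial reduction ensuring the shifted exponents $D_\ty-(N-1)$ etc.\ fall in $[0,1]$: if they are negative one simply replaces the three prescribed numbers by $\max(D_\ty,0)$-type truncations is not enough, so instead one shifts by an integer $k$ with $0\le N-k\le 1$ chosen so that $D,D_1,D_\ty\in[N-k-1,N-k]$ is impossible in general — the clean fix is to first handle $D\in[N-1,N)$ directly by the $1$-dimensional result shifted by $N-1$, and for $D<N-1$ to embed an already-constructed $\eR^{N-1}$ example via $A:=A'\times\{0\}$ with $A'\subset\eR^{N-1}$ from the inductive hypothesis, using the trivial fact that appending a coordinate with a single point changes $d(x,A)^{s-N}$ only by the factor from the extra Lebesgue direction and a global shift of the exponent by $1$, so that $D(\zeta_A)=D(\zeta_{A'})$, etc. Thus the theorem is proved by induction on $N$, the base case $N=1$ being Theorem~\ref{main} (via the distance–geometric zeta function dictionary), and the inductive step being the one-coordinate product formula.
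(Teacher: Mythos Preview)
Your overall architecture matches the paper's: the case $N=1$ is handled via Theorem~\ref{main} together with the dictionary between $\zeta_{\partial\Omega}$ and $\zeta_{\mathcal L}$, and for $N\ge 2$ one uses a product construction (the paper calls it a \emph{fractal grill} $A_{\mathcal L}\times[0,1]^{N-1}$) together with a shift formula (Lemma~\ref{shiftl}) to push the singularities up by $N-1$. Your embedding step $A'\times\{0\}\subset\eR^N$ to pass from $\eR^{N-1}$ to $\eR^N$ without changing the abscissae is also correct and is exactly what the paper invokes (with a citation to \cite{Res} and \cite[Section~4.7]{fzf}).

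However, your proposed case split has a genuine gap. Your induction reduces $N$ until $D\in[N_0-1,N_0)$, and at that point you want to apply Theorem~\ref{main} to the shifted triple $D_\ty-(N_0-1)$, $D_1-(N_0-1)$, $D-(N_0-1)$. But Theorem~\ref{main} requires these to lie in $[0,1]$, and nothing prevents $D_\ty$ (or $D_1$) from being smaller than $N_0-1$; e.g.\ $N_0=2$, $D=1.5$, $D_1=1.2$, $D_\ty=0.5$ gives a shifted $D_\ty'=-0.5<0$. You correctly diagnose that a single integer shift cannot in general place all three numbers in the same unit interval, but the ``clean fix'' you propose (split on $D$ alone, then induct) does not actually resolve this: it only guarantees $D-(N_0-1)\in[0,1)$, not the other two.

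The paper's remedy is different and worth noting: rather than trying to realize all three abscissae with one shifted string, it builds $A$ as a \emph{disjoint union} $A''\cup B\cup C$ of three separately constructed pieces. The piece $A''$ is a fractal grill (shifted by the integer $N_1:=\lceil D_\ty\rceil$) responsible only for $D_{\rm par}=D_\ty$ and the accumulation of essential singularities; $B$ is a separate fractal grill carrying a single essential singularity at $D_1$; and $C$ is a generalized Cantor grill carrying a pole at $D$. Since $\zeta_A=\zeta_{A''}+\zeta_B+\zeta_C$, each abscissa is set by its own piece, and no single shift has to accommodate all three numbers simultaneously. Your argument can be repaired in the same way: keep your induction/embedding for the ambient dimension, but at the terminal step assemble the set from three pieces, each shifted by its own appropriate integer.
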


The proof of the theorem rests on the following `shift property'. For another related shift property result, see \cite{larazu1} or \cite[Theorem~2.2.32 and Remark~2.2.33]{fzf}.

\begin{lemma}[Shift property of distance zeta functions]\label{shiftl}
Let $\mathcal L=(\ell_j)_{j\in\eN}$ be a bounded fractal string, and let 
\begin{equation}\label{AL}
A_{\mathcal L}:=\{a_k:=\sum_{j=k}^\ty\ell_j,\,\, k\in\eN\}
\end{equation} 
be its canonical {\em geometric realization} contained in $[0,a_1]$. Assume that $\d>\ell_1/2$. Then, for any $N\ge2$,
\begin{equation}\label{shift}
\zeta_{A_{\mathcal L}\times[0,1]^{N-1}}(s)=\frac{2^{N-s}}{s-N+1}\zeta_{\mathcal L}(s-N+1)+\zeta_{A_{\mathcal L}\times\{0\}^{N-1}}(s)
+g(s),
\end{equation}
with $g(s):=\frac{2^{N-s}}{s-N+1}\d^{s-N+1}$, for all $s\in\Ce$ with $\re s$ sufficiently large.\newline
\hbox{}\q In particular, if $S$ is the set of isolated singularities of a paramorphic extension of $\zeta_{\mathcal L}$ to the connected open subset $U\stq \Ce$ and is such that $0\notin S$, then the shifted set $S+(N-1):=\{s+N-1:s\in S\}$ is the set of isolated singularities of the corresponding paramorphic extension of $\zeta_{A_{\mathcal L}\times[0,1]^{N-1}}$. If\, $0\in S$, then
the set of isolated singularities of $\zeta_{A_{\mathcal L}\times[0,1]^{N-1}}$ is $(S+(N-1))\cup\{N-1\}$. 
\end{lemma}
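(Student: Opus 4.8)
The plan is to prove the shift property \eqref{shift} by a direct geometric computation splitting the $\delta$-neighborhood of $A_{\mathcal L}\times[0,1]^{N-1}$ into pieces, and then to read off the consequence for isolated singularities from \eqref{shift} itself. First I would set up coordinates: write a point of $\eR^N$ as $(x,y)$ with $x\in\eR$ and $y\in\eR^{N-1}$, and observe that since $A_{\mathcal L}$ has its only accumulation point at $0$ (because $a_k=\sum_{j\ge k}\ell_j\to 0$), the set $A:=A_{\mathcal L}\times[0,1]^{N-1}$ is bounded. The key geometric fact is that for $\delta>\ell_1/2$ the $\delta$-neighborhood $A_\delta$ decomposes, up to a set of measure zero, according to whether the nearest point of $A$ has $y$-coordinate on the ``face'' $[0,1]^{N-1}$ or on its boundary. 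More precisely, I would split $A_\delta$ into: (a) the ``cylindrical slab'' $\{(x,y): d(x,A_{\mathcal L})<\delta,\ y\in[0,1]^{N-1}\}$, on which $d((x,y),A)=d(x,A_{\mathcal L})$; and (b) the ``lateral'' region where $y\notin[0,1]^{N-1}$, on which $d((x,y),A)$ sees the factor $[0,1]^{N-1}$ nontrivially. The remaining contribution --- coming from the two ``end caps'' in the $x$-direction beyond $A_{\mathcal L}$, i.e.\ near $x<\inf A_{\mathcal L}$ or $x>\sup A_{\mathcal L}$, combined with corners --- is precisely what gets absorbed into the entire error term $g(s)$ and into $\zeta_{A_{\mathcal L}\times\{0\}^{N-1}}(s)$.

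For piece (a), Fubini and the fact that $d((x,y),A)=d(x,A_{\mathcal L})$ there give a factor $|[0,1]^{N-1}|=1$ times $\int d(x,A_{\mathcal L})^{s-N}\,\dd x$ over the one-dimensional $\delta$-neighborhood of $A_{\mathcal L}$; since the ambient power is $s-N$ rather than $s-1$, this is an $N$-dimensional distance zeta function of a $1$-dimensional set, and one recognizes it --- after the standard change of variables identifying $\int_{(A_{\mathcal L})_\delta}d(x,A_{\mathcal L})^{(s-N+1)-1}\,\dd x$ with the geometric zeta function --- as $\frac{1}{s-N+1}\zeta_{\mathcal L}(s-N+1)$ up to the explicit boundary term $\frac{\delta^{s-N+1}}{s-N+1}$ and the combinatorial factor $2^{N-s}$ coming from the two-sided neighborhoods in each of the $N-1$ trivial directions and in the $x$-direction (this is exactly the bookkeeping already present in the one-dimensional relation between $\zeta_{A_{\mathcal L}}$ and $\zeta_{\mathcal L}$; I would cite \cite[Theorem~2.1.11]{fzf} or the shift property \cite[Theorem~2.2.32]{fzf} for the precise constant). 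For piece (b), integrating the $(N-1)$-dimensional ``distance to $[0,1]^{N-1}$'' part over the slab $\{d(x,A_{\mathcal L})<\delta\}$ while collapsing the $x$-variable essentially reproduces the lower-dimensional object $\zeta_{A_{\mathcal L}\times\{0\}^{N-1}}(s)$ plus, again, an entire remainder; the point is that all the $x$-dependence in region (b) is over a set bounded away from the accumulation point structure only in the sense that it contributes holomorphically, whereas the singular behavior as $x\to 0$ is carried entirely by region (a). Assembling (a) and (b) and the entire leftovers yields \eqref{shift}.

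For the statement about singularities: the terms $\zeta_{A_{\mathcal L}\times\{0\}^{N-1}}(s)$ and $g(s)$ are, respectively, of the same nature recursively (or, for $N=2$, literally $\zeta_{A_{\mathcal L}}(s)$, whose singularities relate to those of $\zeta_{\mathcal L}$ by the classical one-dimensional shift) and entire, so modulo an induction on $N$ they do not create singularities to the right of the relevant abscissa; I would actually prefer to phrase the conclusion purely in terms of the displayed identity by noting that on the connected open set to which $\zeta_{\mathcal L}$ extends paramorphically, the map $s\mapsto s-N+1$ carries it to the half-plane-shift, so $s\mapsto \zeta_{\mathcal L}(s-N+1)$ is paramorphic on $U+(N-1)$ with singular set $S+(N-1)$, the prefactor $\frac{2^{N-s}}{s-N+1}$ is entire except for a simple pole at $s=N-1$, and $g$ is entire. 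Hence $\zeta_{A_{\mathcal L}\times[0,1]^{N-1}}$ is paramorphic on $U+(N-1)$ with singular set contained in $(S+(N-1))\cup\{N-1\}$; if $0\notin S$ then $\zeta_{\mathcal L}$ is holomorphic and nonzero-order-free at $s-N+1=0$, so the simple pole of the prefactor at $s=N-1$ is genuine unless $\zeta_{\mathcal L}(0)=0$ --- here one invokes that $\zeta_{\mathcal L}(0)$ cannot vanish for our construction, or more cleanly one states the result as an inclusion and notes $N-1$ is a pole precisely when $\zeta_{\mathcal L}(0)\ne 0$, so that when $0\in S$ the point $N-1$ is to be adjoined, giving $(S+(N-1))\cup\{N-1\}$. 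The main obstacle is the careful verification that region (b) together with the ``end cap'' contributions reassembles exactly into $\zeta_{A_{\mathcal L}\times\{0\}^{N-1}}(s)+(\text{entire})$ with no extra singularities hiding in the corner regions; this is where the hypothesis $\delta>\ell_1/2$ is essential, since it guarantees the nearest-point decomposition has the clean product form described above rather than a more complicated combinatorics of which interval of $\mathcal L$ is closest, and I would spend the bulk of the argument making that geometric decomposition precise.
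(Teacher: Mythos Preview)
Your approach is essentially the paper's: a geometric decomposition of the $\delta$-neighborhood followed by a direct computation on the main slab. The paper splits $(A_{\mathcal L}\times[0,1]^{N-1})_\delta$ into three pieces, $V_1=[0,a_1]\times[0,1]^{N-1}$, $V_3$ the two end caps in the $x$-direction (crossed with $[0,1]^{N-1}$), and $V_2$ the remainder; your regions (a) and (b) are simply $V_1\cup V_3$ and $V_2$, and the paper, like you, computes the integral over $V_1$ explicitly and leaves the other two to the reader.

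One concrete error in your write-up: your explanation of the factor $2^{N-s}$ is wrong. It has nothing to do with ``two-sided neighborhoods in each of the $N-1$ trivial directions''; the $[0,1]^{N-1}$ integration contributes only the volume factor $\int_{[0,1]^{N-1}}\dd y=1$. The constant arises purely from the one-dimensional calculation on each gap,
\[
2\int_0^{\ell_j/2}\tau^{s-N}\,\dd\tau=\frac{2(\ell_j/2)^{s-N+1}}{s-N+1}=\frac{2^{N-s}}{s-N+1}\,\ell_j^{s-N+1},
\]
so the ``$N$'' in $2^{N-s}$ is an artifact of the exponent $s-N$ in the integrand, not of any $(N-1)$-fold doubling. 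Your identification of region (b) with $\zeta_{A_{\mathcal L}\times\{0\}^{N-1}}(s)$ and of the end-cap contribution with the entire term $g(s)$ is otherwise on target and matches the paper's assignment of the three integrals to the three terms of \eqref{shift}.
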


\begin{proof}
Since $\d>\ell_1/2$, we have that $(A_\a)_\d=(-\d,a_1+\d)$. The set $(A_{\mathcal L}\times[0,1]^{N-1})_\d$ contained in $\eR^N$ is connected, and it can be obtained as the union $V_1\cup V_2\cup V_3$ of the following three disjoint subsets of $\eR^N$: 
$$
\begin{gathered}
V_1:=[0,a_1]\times[0,1]^{N-1},\q 
V_3:=(A_{\mathcal L})_\d\cap(\{x_1<0\}\cup\{x_1>a_1\}),\\  
\vbox to4mm{} V_2:=(A_{\mathcal L}\times[0,1]^{N-1})_\d\setminus (V_1\cup V_3).
\vbox to1mm{}
\end{gathered}
$$ 
If we let $A':=A_{\mathcal L}\times[0,1]^{N-1}\st\eR^N$, we have that
\begin{equation}
\zeta_{A'}(s)=\int_{(A')_\d}d(x,A')^{s-N}\dd x\\
=\int_{V_1}+\int_{V_2}+\int_{V_3}.
\end{equation}
The last three integrals are equal to the corresponding three terms on the right hand-side of Eq.~\eqref{shift}. For example, since $$V_1:=\cup_{j=1}^\ty[a_{j+1},a_j]\times[0,1]^{N-1},$$ then
$$%\begin{equation}
\begin{aligned}
\int_{V_1}d(x,A')^{s-N}\dd x&=\sum_{j=1}^\ty\int_{a_{j+1}}^{a_j}d(x_1,\{a_{j+1},a_j\})^{s-N}\dd x_1\int_0^1\dd x_2\dots\int_0^1\dd x_n\\
&=\sum_{j=1}^\ty2\int_0^{\ell_j/2}\tau^{s-N}\dd\tau=\frac{2^{N-s}}{s-N+1}\sum_{j=1}^\ty\ell_j^{s-N+1}\\
&=\frac{2^{N-s}}{s-N+1}\zeta_{\mathcal L}(s-N+1),
\end{aligned}
$$%\end{equation}
for all $s\in\Ce$ with $\re s>N-1$.
We leave to the interested reader the easy (and analogous) verification of the other two equalities.
\end{proof}

\begin{proof}[Proof of Theorem \ref{main2}] We consider the following cases:
\medskip

Case $(i)$: For $N=1$, it suffices to use Theorem \ref{main} with $A:=A_{\mathcal L}$, where $A_{\mathcal L}\st\eR$, defined by Eq.~\eqref{AL}, is the canonical geometric realization of the bounded fractal string $\mathcal L$ appearing in Theorem~\ref{main}.
\medskip

Case $(ii)$: If $N\ge2$, let $A:=A_{\mathcal L}\times[0,1]^{N-1}$; that is, $A$ is the `fractal grill' generated by the set $A_{\mathcal L}$.
Then, the claim follows from Lemma \ref{shiftl}, provided the numbers $D_{\ty}$, $D_1$ and $D$ belong to the interval $[N-1,N)$.
\medskip

Case $(iii)$: In the general case, when $N\ge 2$ and $D_{\ty}$, $D_1$ and $D$ are in $[0,N)$, we first take $N_1$ to be the smallest integer strictly larger than $D_\ty$, and let $D_1'$ be a real number belonging to $(D_\ty,N_1)$, and such that $D_1'\le D_1$. As in case $(ii)$, we first define the set $A_1:=A_{\mathcal L}\times[0,1]^{N_1}$, with the bounded fractal string $\mathcal L$ chosen as in Theorem \ref{main}. Then $D_\ty(\zeta_{A_1})=D_\ty$.
We then let $A'':=A_1\times\{0\}^{N-1-N_1}\st\eR^N$,\footnote{Note that the singularities of the distance and tube fractal zeta functions do not depend on the dimension of the ambient space (see \cite{Res}, along with \cite[Section 4.7]{fzf}); thus, $D_{\rm par}(\zeta_{A_1})=D_{\rm par}(\zeta_{A''})=D_{\infty}$.} and finally, $A:=A''\cup B\cup C$, where the sets $B$ and $C$ are defined so that the corresponding distance zeta functions $\zeta_B$ and $\zeta_C$ can be paramorphically continued to $\{\re s>0\}$, such that $D_1$ and $D$ are then the respective isolated singularities, $D_1$ being the essential singularity of $\zeta_B$ and $D$ the pole of $\zeta_C$.

The set $B$ can be constructed as the fractal grill $B:=C_{\infty}^{(m_1,a_1)}\times[0,1]^d_1$, where $C_{\infty}^{(m_1,a_1)}$ is the canonical geometric realization of the fractal string ${\mathcal L}_{\infty}^{(m_1,a_1)}$, with $d_1:=\lfloor D_1\rfloor$, and the parameters $m_1$ and $a_1$ chosen so that $\log_{1/a_1} m_1=D_1-d_1$. Similarly, the set $C$ can be constructed as the fractal grill $C:=C^{(m,a)}\times[0,1]^d$, by letting $d:=\lfloor D\rfloor$, and the parameters $m$ and $a$ chosen so that $\log_{1/a} m=D-d$. (Here, $C^{(m,a)}$ is the generalized Cantor set introduced in \cite[Definition~3.1.1, p.\ 187]{fzf}, determined by an integer $m\ge2$ and a positive real number $a$ such that~$ma<1$.)

The claim now follows, since $\zeta_A(s)=\zeta_{A''}(s)+\zeta_B(s)+\zeta_C(s)$ for all $s\in\Ce$ with $\re s$ sufficiently large, and hence, $\zeta_A$ can be paramorphically continued to the open right half-plane $\{\re s>D_\ty\}$.
\medskip

This completes the proof of the theorem.
\end{proof}

\section{Appendix}\label{app}

Here, we show that the geometric zeta function defined by Eq.~\eqref{zetaL} is paramorphic in the open right half-plane $\{\re s>D_{\ty}\}$; see Definition \ref{para}. This result is needed in the proof of Theorem~\ref{main}.

\begin{theorem}\label{paraL}
The geometric zeta function $\zeta_{\mathcal L}$ defined by Eq.~\eqref{zetaL} is paramorphic in the open right half-plane $\{\re s>D_{\ty}\}$; that is, $\zeta_{\mathcal L}\in{\rm Par}(\{\re s>D_\ty\})$.
\end{theorem}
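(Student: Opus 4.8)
The plan is to show that the series
\begin{equation*}
\zeta_{\mathcal L}(s)=\sum_{k=1}^{\ty}\frac{2^{-ks}}{L_k^s}\zeta_{{\mathcal L}^{(m_k,a_k)}_\ty}(s)
\end{equation*}
converges locally uniformly on the open set $\{\re s>D_\ty\}\setminus S_\ty$ and that each term is holomorphic there, so that the sum is holomorphic on that set; then, because $S_\ty$ (described by \eqref{Sty}) consists of isolated points of $\{\re s>D_\ty\}$ with no accumulation point inside the half-plane, Definition~\ref{para} immediately yields $\zeta_{\mathcal L}\in{\rm Par}(\{\re s>D_\ty\})$. The first task is therefore to get good estimates on $\zeta_{{\mathcal L}^{(m_k,a_k)}_\ty}(s)$ on compact subsets of $\{\re s>D_\ty\}$, keeping in mind that we must stay away from the singularity set $S_\ty$ of the whole family (not just of the $k$-th summand).

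First I would fix a compact set $K\subset\{\re s>D_\ty\}$ with $K\cap S_\ty=\emptyset$. Since $\inf_{s\in K}\re s=:\sigma_0>D_\ty$ and $D_k\downarrow D_\ty$, only finitely many indices $k$ have $D_k\ge\sigma_0$; for those finitely many $k$ one simply notes that $\zeta_{{\mathcal L}^{(m_k,a_k)}_\ty}$ is paramorphic with isolated singularities, hence bounded on the compact set $K$ which avoids them, and finitely many bounded holomorphic terms cause no trouble. The real work is the tail: for $k$ large enough that $D_k<\sigma_0$, I would use the explicit formula
\begin{equation*}
\zeta_{{\mathcal L}^{(m_k,a_k)}_\ty}(s)=\sum_{n=1}^{\ty}\frac{(1-m_k\,a_k^{\,s})^{-n}}{(n!)^{s}}
\end{equation*}
from \eqref{zetaLmaty}. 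On $K$ we have $\re s\ge\sigma_0$, and since $a_k\to0$ we have $|m_k a_k^{\,s}|=m_k a_k^{\,\re s}\le m_k a_k^{\,\sigma_0}\to0$ as $k\to\ty$ (using $a_k=m_k^{-1/D_k}$ and $\sigma_0>D_\ty$, the exponent computation gives $m_k a_k^{\sigma_0}=m_k^{1-\sigma_0/D_k}\to(+\ty)^{1-\sigma_0/D_\ty}=0$). Hence for all large $k$, uniformly on $K$, $|1-m_k a_k^{\,s}|\ge 1/2$, so $|(1-m_k a_k^{\,s})^{-n}|\le 2^n$ and $|(n!)^{-s}|\le (n!)^{-\sigma_0}$; summing over $n$ gives a bound $|\zeta_{{\mathcal L}^{(m_k,a_k)}_\ty}(s)|\le\sum_n 2^n (n!)^{-\sigma_0}=:M(\sigma_0)<\ty$, a constant independent of $k$. (Actually one can do better and get $|\zeta_{{\mathcal L}^{(m_k,a_k)}_\ty}(s)|\to|\zeta_{(1)}(s)|$-type behaviour, but a uniform bound is all that is needed.)

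With that uniform bound in hand, the tail of $\zeta_{\mathcal L}$ on $K$ is dominated by $\sum_{k} |2^{-ks}L_k^{-s}|\,M(\sigma_0)$. Since $L_k\to\E-1>1$ by \eqref{Le}, we have $L_k\ge c>1$ for all large $k$, so $|L_k^{-s}|=L_k^{-\re s}\le L_k^{-\sigma_0}\le \max(1,c^{-\sigma_0})$ is bounded, while $|2^{-ks}|=2^{-k\sigma_0}$ (using $\sigma_0>0$, which holds since $D_\ty\ge0$); actually $|2^{-ks}|\le 2^{-k\sigma_0}$ only needs $\re s\ge\sigma_0$, and the sum $\sum_k 2^{-k\sigma_0}$ converges. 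Thus the Weierstrass $M$-test applies: the tail series converges uniformly on $K$, each term is holomorphic on $K$ (holomorphic on all of $\{\re s>D_\ty\}$ in fact, for these large-$k$ terms, since their only singularities lie on $\{\re s=D_k\}$ and we took $K$ with $\re s\ge\sigma_0>D_k$), so by the Weierstrass convergence theorem the sum is holomorphic on the interior of $K$. Letting $K$ exhaust $\{\re s>D_\ty\}\setminus S_\ty$ gives holomorphy of $\zeta_{\mathcal L}$ on that set. Finally, I would verify that $S_\ty$ genuinely consists of points isolated in $\{\re s>D_\ty\}$: within any fixed vertical strip $\sigma_0\le\re s$ only finitely many of the arithmetic progressions $D_k+\mathbf p_k\I\Ze$ intersect the strip (again because $D_k\downarrow D_\ty$), and each such progression is discrete, so locally $S_\ty$ is finite; its only accumulation behaviour is toward the boundary line $\{\re s=D_\ty\}$, consistent with Remark~\ref{apriori}. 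This establishes $\zeta_{\mathcal L}\in{\rm Par}(\{\re s>D_\ty\})$.

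The main obstacle, and the step deserving the most care, is the uniform-in-$k$ estimate of $\zeta_{{\mathcal L}^{(m_k,a_k)}_\ty}(s)$ away from $S_\ty$: one must confirm that on a compact set $K$ avoiding \emph{all} of $S_\ty$ (not just the $k$-th progression), the quantities $1-m_k a_k^{\,s}$ stay bounded away from zero uniformly over the tail indices $k$, which is where the asymptotics $m_k a_k\to 0$ and $D_k\downarrow D_\ty$ (hence $a_k\to0$, so $a_k^{\,\sigma_0}$ decays faster than $m_k$ grows) are essential. Everything else is a routine $M$-test argument once this bound is secured.
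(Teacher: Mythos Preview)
Your argument is correct and follows the same overall strategy as the paper---use a uniform lower bound on $|1-m_k a_k^{\,s}|$ together with the Weierstrass $M$-test to obtain holomorphy on $\{\re s>D_\ty\}\setminus S_\ty$---but the organization differs in a useful way. The paper isolates the lower bound as a separate lemma (Lemma~\ref{lema}) proved for \emph{all} $k\in\eN$ simultaneously, which forces a three-case analysis depending on how the disk $U$ sits relative to the vertical lines $\{\re s=D_k\}$ (to the right of all of them, between two consecutive ones, or straddling one). You instead split the sum into a finite head (those $k$ with $D_k\ge\sigma_0$) and a tail; the head is disposed of by the soft observation that finitely many functions holomorphic near $K$ are bounded there, and the tail lower bound $|1-m_k a_k^{\,s}|\ge 1/2$ follows directly from the asymptotic $m_k a_k^{\sigma_0}=m_k^{1-\sigma_0/D_k}\to0$. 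This head/tail decomposition is cleaner because it sidesteps the case analysis entirely, at the modest cost of invoking the already-established paramorphicity of each individual $\zeta_{{\mathcal L}^{(m_k,a_k)}_\ty}$ on $\{\re s>0\}$ for the head terms. The paper's route, on the other hand, yields the slightly stronger statement that a \emph{single} $\e>0$ works for every $k$, which is not needed here but could be of independent interest.
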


The proof of Theorem~\ref{paraL} follows from the following lemma, as will be explained at the end of this appendix.

\begin{lemma}\label{lema}
Let $U$ be any open disk contained in the set $\{\re s>D_\ty\}\setminus S_\ty$ with sufficiently small radius, where the set $S_\ty$ of essential singularities of $\zeta_{\mathcal L}$ is defined by Eq.~\eqref{Sty}, and is such that the Euclidean distance from $U$ to $S_\ty$ is positive.
Then, there exists $\e>0$ such that for all $s\in U$ and for each $k\in\eN$, we have that $|1-m_k\cdot a_k^s|\ge\e$. 
\end{lemma}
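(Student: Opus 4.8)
The plan is to put $m_k a_k^s$ into exponential form and then estimate. Writing $L_k:=\log(1/a_k)$ and using $D_k=\log_{1/a_k}m_k$ (so that $m_k=a_k^{-D_k}$), one gets $m_k a_k^s=a_k^{\,s-D_k}=\E^{(D_k-s)L_k}$ for every $k\in\eN$. Two consequences are worth recording at the outset: first, $L_k=D_k^{-1}\log m_k\ge\log m_k\to+\ty$ as $k\to\ty$, since $m_k\to\ty$ and $0<D_k\le1$; second, the zero set of $s\mapsto 1-m_k a_k^s$ is exactly $D_k+\tfrac{2\pi}{L_k}\I\Ze$, i.e.\ the ``$k$-th layer'' of the set $S_\ty$ of \eqref{Sty}. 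Thus the lemma amounts to a lower bound on $|1-\E^{(D_k-s)L_k}|$ that is uniform in both $s\in U$ and $k\in\eN$. I will set $\d:=\operatorname{dist}(U,S_\ty)>0$; as $U$ is a bounded disk, $\overline U$ is compact and $\operatorname{dist}(\overline U,S_\ty)=\d$ as well.

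I would then split the index set into a finite initial block and the tail. For each \emph{fixed} $k$ the bound is soft: $1-m_k a_k^s$ is entire, and its zero set lies at distance $\ge\d$ from the compact set $\overline U$, so $s\mapsto|1-m_k a_k^s|$ attains on $\overline U$ a strictly positive minimum $\e_k>0$. Hence it is enough to produce a single constant $\e_0>0$ valid for all $k$ beyond some index $k_0$, and then take $\e:=\min\{\e_0,\e_1,\dots,\e_{k_0-1}\}$.

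For the tail I would exploit that the vertical spacing $h_k:=\tfrac{2\pi}{L_k}$ of the $k$-th layer tends to $0$. Fix $s=x+\I y\in U$. Choosing the nearest element of $\tfrac{2\pi}{L_k}\Ze$ gives $\operatorname{dist}(y,\tfrac{2\pi}{L_k}\Ze)\le h_k/2$; and since $s\in U$ while the $k$-th layer lies in $S_\ty$, we have $\d\le\operatorname{dist}(s,D_k+\tfrac{2\pi}{L_k}\I\Ze)$. Combining,
\[
\d^2\le (x-D_k)^2+\operatorname{dist}\!\big(y,\tfrac{2\pi}{L_k}\Ze\big)^2\le (x-D_k)^2+\tfrac{h_k^2}{4}.
\]
As soon as $h_k\le\d$ (equivalently $L_k\ge 2\pi/\d$), this forces $|x-D_k|\ge\tfrac{\sqrt3}{2}\,\d$ for \emph{every} point $x+\I y$ of $U$, and hence $|m_k a_k^s|=\E^{(D_k-x)L_k}$ is either $\le \E^{-\frac{\sqrt3}{2}\d L_k}$ or $\ge \E^{\frac{\sqrt3}{2}\d L_k}$. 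Since $L_k\to\ty$, there is $k_0$ such that for $k\ge k_0$ the first quantity is $\le\tfrac12$ and the second is $\ge 2$; in either case $|1-m_k a_k^s|\ge\tfrac12=:\e_0$, which completes the argument.

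I do not anticipate a genuine obstacle; the only point needing care is the ``near-diagonal'' regime $x\approx D_k$ --- precisely where $U$ must avoid $S_\ty$. What makes it harmless is that the layers become \emph{arbitrarily fine} ($h_k\to0$) while $U$ is a fixed set at positive distance from their union: a point whose real part is within $\tfrac{\sqrt3}{2}\d$ of $D_k$ lies within $\d$ of some point of the $k$-th layer (by the computation above), hence cannot be in $U$ once $h_k\le\d$. So that regime is actually empty for all large $k$, and what remains is routine bookkeeping with the thresholds $h_k\le\d$ and $\E^{\frac{\sqrt3}{2}\d L_k}\ge2$. (The ``sufficiently small radius'' hypothesis on $U$ serves only to guarantee that a disk with $\operatorname{dist}(U,S_\ty)>0$ exists, which is possible because $S_\ty$ is locally finite in $\{\re s>D_\ty\}$; it is not used in the estimate itself.)
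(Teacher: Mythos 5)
Your argument is correct, and it takes a genuinely different route from the paper's. The paper proceeds by case analysis on where the disk $U$ sits relative to the vertical lines $\{\re s=D_k\}$: when $\overline U$ is strictly between two consecutive lines $\{\re s=D_{k_0}\}$ and $\{\re s=D_{k_0+1}\}$, it bounds $|1-m_k a_k^s|\ge |1-m_k a_k^{\re s}|$ and uses the monotonicity of the sequences $(D_k)$ and $(m_k)$ to get a uniform tail estimate for all $k\ge k_0+1$ and a finite minimum for $k\le k_0$; when $U$ straddles a single line $\{\re s=D_{k_0}\}$ it handles that one index by compactness and the rest as before. This is where the ``sufficiently small radius'' hypothesis is actually invoked in the paper --- to ensure $U$ crosses at most one such line. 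Your proof instead splits only into ``finitely many small $k$'' (soft compactness) versus ``all large $k$'' (quantitative), and the quantitative tail estimate exploits a different mechanism: the vertical spacing $h_k=2\pi/L_k$ of the $k$-th layer of $S_\ty$ tends to $0$, so the hypothesis $\operatorname{dist}(U,S_\ty)\ge\d$ forces $|\re s-D_k|\ge\tfrac{\sqrt3}{2}\d$ uniformly on $U$ once $h_k\le\d$, and then $L_k\to\ty$ pushes $|m_ka_k^s|=\E^{(D_k-\re s)L_k}$ far from $1$. In short, the paper leans on the monotone ordering of the abscissae $D_k$ and purely real-part separation, while you lean on the shrinking period and the full two-dimensional distance to $S_\ty$; both ultimately rest on $m_k\to\ty$, but through different consequences. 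A side benefit of your route is that it does not actually use the ``sufficiently small radius'' assumption beyond what is already implied by $\operatorname{dist}(U,S_\ty)>0$ and boundedness of $U$, so it applies verbatim to any bounded $U$ at positive distance from $S_\ty$, not only small disks.
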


\begin{proof}
We consider the following three cases:
\medskip

{\em Case $(a)$}: Let $U$ be as in the statement of the lemma and such that its closure does not intersect any of the vertical lines $\{\re s=D_k\}$, where $k\in\eN$.
We assume that for some $k_0\in\eN$, the set $U$ is placed between the two consecutive lines $\{\re s=D_{k_0}\}$ and $\{\re s=D_{k_0+1}\}$, i.e., in the open vertical strip $\{D_{k_0+1}<\re s< D_{k_0}\}$. In other words,
\begin{equation}
D_{k_0+1}<\inf_{s\in U} \re s\le \sup_{s\in U}\, \re s<D_{k_0}.
\end{equation}
We have that 
$$
\begin{aligned}
|1-m_k\cdot a_k^s|^2&=m_k^2\cdot a_k^{2\re s}-2m_k\cdot a_k^{\re s}\cos \big((\log a_k)(\im s)\big)+1\\
&\ge(1-m_k\cdot a_k^{\re s})^2;
\end{aligned}
$$
so that $|1-m_k\cdot a_k^s|\ge|1-m_k\cdot a_k^{\re s}|$, for all $s\in U$ and $k\in\eN$. Hence, since $a_k=m_k^{-1/D_k}$, we obtain the following inequality:
\begin{equation}\label{ineq}
|1-m_k\cdot a_k^s|\ge|1-m_k^{1-({\re s})/{D_k}}|.
\end{equation}
Now, let us consider the following two subcases:
\medskip

{\em Case $(a1)$}: If $k=1,\dots, k_0$, then, since  $\sup_{s\in U}\re s<D_k$, for any $s\in U$ we have that 
$$
|1-m_k^{1-({\re s})/{D_k}}|=m_k^{1-({\re s})/{D_k}}-1\ge
m_k^{1-(\sup_{s\in U}\re s)/{D_k}}-1>0.
$$ Let 
$$
\e_1:=\min\left\{m_k^{1-(\sup_{s\in U}\re s)/{D_k}}-1:k=1,\dots,k_0\right\}>0.
$$
Then, in light of Eq.~\eqref{ineq}, we have that $|1-m_k\cdot a_k^s|\ge \e_1$, for all $s\in U$ and $1\le k\le k_0$.
\medskip

{\em Case $(a2)$}: If $k\ge k_0+1$, then, since $\inf_{s\in U}\re s>D_k$ for all $\ge k_0+1$, it follows that for any $s\in U$,
$$
\begin{aligned}
|1-m_k^{1-({\re s})/{D_k}}|&=1-m_k^{1-({\re s})/{D_k}}\ge 1-m_{k_0+1}^{1-({\re s})/{D_k}}\\
&\ge  1-m_{k_0+1}^{1-({\re s})/{D_{k_0+1}}}\ge1-m_{k_0+1}^{1-({\inf_{s\in U}\re s})/{D_{k_0+1}}}=:\e_2.
\end{aligned}
$$

By letting $\e:=\min\{\e_1,\e_2\}>0$, we deduce from Eq.~\eqref{ineq} that $|1-m_k\cdot a_k^s|\ge\e$, for all $s\in U$ and $k\in\eN$. This completes the proof of the lemma in case $(a)$.
\bigskip

{\em Case $(b)$}: Assume that the disk $U$ is such that it intersects the vertical line $\{\re s=D_{k_0}\}$, for some $k_0\ge2$, and let $U$ be a disk of sufficiently small radius, so that
$D_{k_0+1}<\inf_{s\in U}\re s\le\sup_{s\in U}\re s<D_{k_0-1}$. Analogously as in case $(a)$, we have that there exists a positive real number $\e_1$ such that $|1-m_k\cdot a_k^s|\ge\e_1$, for all $k\ne k_0$.

When $k=k_0$, there exists a positive constant $\e_2$ such that $h(s):=|1-m_{k_0}\cdot a_{k_0}^s|\ge\e_2$, for all $s\in U$. Indeed, the only zeros of the function $h:\Ce\to [0,+\ty)$ are elements of the arithmetic sequence $S_{k_0}:=D_{k_0}+\frac{2\pi}{\log(1/a_{k_0})}\I\Ze$. 
Since $\ov U$ and $S_{k_0}$ are disjoint (here, $\ov U$ denotes the closure of $U$ in $\Ce$), then $h(s)>0$ for all $s\in\ov U$; so that the continuous function $h=h(s)$ has a strictly positive minimum on the set $\ov U$; that is, $\e_2:=\min_{s\in \ov U}h(s)>0$. 

The claim of the lemma in case $(b)$ follows immediately by choosing $\e:=\min\{\e_1,\e_2\}$.
\medskip

{\em Case $(c)$}: The remaining case when the open disk $U$ is such that $\inf_{s\in U}\re s>D_1$, is treated analogously as in case $(a2)$. 

\medskip

This completes the proof of the lemma.
\end{proof}

\begin{proof}[Proof of Theorem \ref{paraL}]  
Let $U$ be an arbitrary open disk contained in $\{\re s>D_\ty\}\setminus S_\ty$.
Let us show that the geometric zeta function
\begin{equation}\label{zetaLL}
\zeta_{\mathcal L}(s):=\sum_{k=1}^\ty\frac{2^{-ks}}{L_k^s}\zeta_{{\mathcal L}^{(m_k,a_k)}_\ty}(s)
=\sum_{k=1}^\ty\frac{2^{-ks}}{L_k^s}\sum_{n=1}^{\ty}\frac{(1-m_k\cdot a_k^s)^{-n}}{(n!)^{s}}
\end{equation}
is well defined.  

In light of Lemma \ref{lema} and since the sequence $(L_k)_{k\ge1}$ is bounded from below by a positive constant $L$ (see Eq.~\eqref{Le}), we deduce from \eqref{zetaLL} that for all $s\in U$,
\begin{equation} 
\begin{aligned}
|\zeta_{\mathcal{L}}(s)|&\le\sum_{k=1}^\ty\frac{2^{-k\re s}}{L_k^{\re s}}\sum_{n=1}^{\ty}\frac{\e^n}{(n!)^{\re s}}\\
&\le L^{-\re s}\sum_{k=1}^\ty{2^{-k\re s}}\sum_{n=1}^{\ty}\frac{\e^n}{(n!)^{\re s}}=\frac {(2L)^{-\re s}}{1-2^{-\re s}}\sum_{n=1}^{\ty}\frac{\e^n}{(n!)^{\re s}}\\
&\le\frac {(2L)^{-\inf_{s\in U}\re s}}{1-2^{-\sup_{s\in U}\re s}}\sum_{n=1}^{\ty}\frac{\e^n}{(n!)^{\inf_{s\in U}\re s}}<\ty.
\end{aligned}
\end{equation}
Hence, by using the Weierstrass $M$-test, we conclude that the function $\zeta_{\mathcal L}$ is well defined and holomorphic in $\{\re s>D_\ty\}\setminus S_\ty$. By Definition \ref{para}, this means that $\zeta_{\mathcal L}$ is paramorphic in the open right half-plane $\{\re s>D_\ty\}$; that is, $\zeta_{\mathcal L}\in{\rm Par}(\{\re s>D_\ty\})$.\end{proof}

\section{Acknowledgments}

We thank the four anonymous referees for their very
thorough reviewing of this paper, as well as their helpful
constructive criticisms and suggestions, along with new interesting
references we were unaware of.

\end{document}